\Crefname{figure}{Fig.}{Figs.}
\newcommand{\getsr}{{\:\stackrel{{\scriptscriptstyle \hspace{0.2em}\$}} {\leftarrow}\:}}
\newcommand{\concat}{\:\|\:}
\newtheorem{thm}{Theorem} 
\begin{document}

\title{Denial-of-Service Vulnerability of Hash-based Transaction Sharding: Attack and Countermeasure}

\author{Truc~Nguyen
        and~My~T.~Thai
\IEEEcompsocitemizethanks{\IEEEcompsocthanksitem T. Nguyen and My T. Thai are with the Department of Computer \& Information Science \& Engineering, University of Florida, Gainesville,
FL, 32611.\protect\\
E-mail: truc.nguyen@ufl.edu and mythai@cise.ufl.edu
}
}



\IEEEtitleabstractindextext{%
\begin{abstract}
Since 2016, sharding has become an auspicious solution to tackle the scalability issue in legacy blockchain systems. Despite its potential to strongly boost the blockchain throughput, sharding comes with its own security issues. To ease the process of deciding which shard to place transactions in, existing sharding protocols use a hash-based transaction sharding in which the hash value of a transaction determines its output shard. Unfortunately, we show that this mechanism opens up a loophole that could be exploited to conduct a single-shard flooding attack, a type of Denial-of-Service (DoS) attack, to overwhelm a single shard that ends up reducing the performance of the system as a whole. 

To counter the single-shard flooding attack, we propose a countermeasure that essentially eliminates the loophole by rejecting the use of hash-based transaction sharding. The countermeasure leverages the Trusted Execution Environment (TEE) to let blockchain's validators securely execute a transaction sharding algorithm with a negligible overhead. We provide a formal specification for the countermeasure and analyze its security properties in the Universal Composability (UC) framework. Finally, a proof-of-concept is developed to demonstrate the feasibility and practicality of our solution.
\end{abstract}

\begin{IEEEkeywords}
Blockchain, denial-of-service, trusted execution environment, flooding
\end{IEEEkeywords}}

\maketitle

\IEEEdisplaynontitleabstractindextext

%
\IEEEpeerreviewmaketitle

\IEEEraisesectionheading{\section{Introduction}\label{sec:introduction}}

%
%
%
%
Sharding, an auspicious solution to tackle the scalability issue of blockchain, has become one of the most trending research topics and been intensively studied in recent years \cite{luu2016secure,zamani2018rapidchain,kokoris2018omniledger,wang2019monoxide,nguyen2019optchain,huang2020repchain,yu2020ohie,hong2021pyramid,manuskin2020ostraka}. In the context of blockchain, sharding is the approach of partitioning the set of nodes (or validators) into multiple smaller groups of nodes, called shard, that operate in parallel on disjoint sets of transactions and maintain disjoint ledgers. By parallelizing the consensus work and storage, sharding reduces drastically the storage, computation, and communication costs that are placed on a single node, thereby scaling the system throughput proportionally to the number of shards. Previous studies \cite{zamani2018rapidchain,kokoris2018omniledger,nguyen2019optchain} show that sharding could potentially improve blockchain's throughput to thousands of transactions per second (whereas the current Bitcoin system only handles up to 7 transactions per second and requires 60 minutes confirmation time for each transaction).

Despite the incredible results in improving the scalability, blockchain sharding is still vulnerable to some severe security problems. The root of those problems is that, with partitioning, the honest majority of mining power or stake share is dispersed into individual shards. This significantly reduces the size of the honest majority in each shard, which in turn dramatically lowers the attack bar on a specific shard. Hence, a blockchain sharding system must have some mechanisms to prevent adversaries from gaining the majority of validators of a single shard, this is commonly referred to as single-shard takeover attack.

In this paper, we take a novel approach by exploiting the inter-shard consensus to identify a new vulnerability of blockchain sharding. One intrinsic attribute of blockchain sharding is the existence of cross-shard transactions that, simply speaking, are transactions that involve multiple shards. These transactions require the involved shards to perform an inter-shard consensus mechanism to confirm the validity. Hence, intuitively, if we could perform a Denial-of-Server (DoS) attack to one shard, it would also affect the performance of other shards via the cross-shard transactions. Furthermore, both theoretical and empirical analysis \cite{zamani2018rapidchain,nguyen2019optchain} show that most existing sharding protocols have 99\% cross-shard transactions. This implies that an attack on one shard could potentially impact the performance of the entire blockchain. In addition, with this type of attacks, the attacker is a client of the blockchain system, hence, this attack can be conducted even when we can guarantee the honest majority in every shard.

Although existing work does have some variants of flooding attacks that try to overwhelm the entire blockchain by having the attacker generate a superfluous amount of dust transactions \cite{saad2019mempool,baqer2016stressing}, it is unclear how we could conduct this attack in a sharding system. In fact, we emphasize that a conventional transactions flooding attack on the entire blockchain (as opposed to a single shard) would not be effective for two reasons. First, blockchain sharding has high throughput, hence, the cost of attack would be enormous to generate a huge amount of dust transactions that is sufficiently much greater than the system throughput. Second, more importantly, since the sharding system scales with the number of shards, it can easily tolerate such attacks by adding more shards to increase throughput.

To bridge this gap, we propose a single-shard flooding attack to exploit the DoS vulnerability of blockchain sharding. Instead of overwhelming the entire blockchain, an attacker would strategically place a tremendous amount of transactions into one single shard in order to reduce the performance of that shard, as the throughput of one shard is not scalable. The essence of our attack comes from the fact that most sharding proposals use hash-based transaction sharding \cite{kokoris2018omniledger,zamani2018rapidchain,luu2016secure,huang2020repchain}: a transaction's hash value is used to determine which shard to place the transaction (i.e., output shard). Since that hash value (e.g., SHA-256) of a transaction is indistinguishable from that of a random function, this mechanism can efficiently distribute the transactions evenly among the shards and thus widely adopted. 
Therefore, an attacker can manipulate the hash to generate an excessive amount of transactions to one shard.

As we argue that using hash values to determine the output shard is not secure, we propose a countermeasure to efficiently eliminate the attack for the sharding system. By not using the transaction's hash value or any other attributes of the transaction, we can delegate the task of determining the output shard to the validators, then the adversary cannot carry out this DoS attack. However, this raises two main challenges: (1) what basis can be used to determine the output shard of a transaction, and (2) how honest validators can agree on the output of (1). For the first challenge, we need a \textit{transaction sharding algorithm} to decide the output shard for each transaction. OptChain \cite{nguyen2019optchain} is an example algorithm where it aims to minimize the number of cross-shard transactions and also balance the load among the shards. For the second challenge, a naive solution is to have the validators reach on-chain consensus on the output shard of every transaction. However, that would be very costly and reject the main concept of sharding, that is, each validator only processes a subset of transactions to parallelize the consensus work and storage.

To overcome the aforementioned challenge, we establish a system for executing the transaction sharding algorithm off-chain and attesting the correctness of the execution. As blockchain validators are untrusted, we need to guarantee that the execution of the transaction sharding algorithm is tamper-proof. To accomplish this, we leverage the Trusted Execution Environment (TEE) to isolate the execution of the algorithm inside a TEE module, shielding it from potentially malicious hosts. With this approach, we are not imposing any significant on-chain computation overhead as compared to the hash-based transaction sharding and also maintain the security properties of the blockchain. Moreover, this solution can be easily integrated into existing blockchain sharding proposals, and as modern Intel CPUs from 2014 support TEE, the proposed countermeasure is compatible with current blockchain systems.

{\bf Contribution.} Our main contributions are as follows:
\begin{itemize}
    \item We identify a new attack on blockchain sharding that exploits the loophole of using hash-based transaction sharing, namely single-shard flooding attack. 
    \item To evaluate the potential impact of this attack on the blockchain system, we develop a discrete-event simulator for blockchain sharding that can be used to observe how sharding performance changes when the system is under attack. Not only for our attack analysis purposes, this simulator can also assist the research community in evaluating the performance of a sharding system without having to set up multiple computing nodes. 
    \item We propose a countermeasure to the single-shard flooding attack by executing transaction sharding algorithms using TEE. Specifically, we provide a formal specification of the system and formally analyze its security properties in the Universal Composability (UC) framework with a strong adversarial model.
    \item To validate our proposed countermeasure, we develop a proof-of-concept implementation of the system and provide a performance analysis to demonstrate its feasibility.
\end{itemize}

\noindent {\bf Organization.} The rest of the paper is structured as follows. Some background and related work are summarized in \Cref{sec:relate}. \Cref{sec:Attack} describes in detail the single-shard flooding attack with some preliminary analysis to demonstrate its practicality. In \Cref{sec:analyze}, we present the construction of our simulator and conduct some experiments to demonstrate the damage of the attack. The countermeasure is discussed in \Cref{sec:countermeasure} with a formal specification of the system. \Cref{sec:sec-eval} gives a security analysis of the countermeasure along with a performance evaluation on the proof-of-concept implementation. Finally, \Cref{sec:con} concludes our paper.

\section{Background and Related Work} \label{sec:relate}

{\bf Blockchain sharding.}
Several solutions \cite{hong2021pyramid,manuskin2020ostraka,yu2020ohie,zamani2018rapidchain,nguyen2019optchain,luu2016secure,kokoris2018omniledger,wang2019monoxide,huang2020repchain} suggest partitioning the blockchain into shards to address the scalability issue in Bitcoin blockchain. Typically, with sharding, the blockchain's state is divided into multiple shards, each has its own independent state and transactions and is managed by the shard's validators. By having multiple shards where each of them processes a disjoint set of transactions, the computation power is parallelized, and sharding in turn helps boost the system throughput with respect to the number of shards.  
Some main challenges of a sharding protocol include (1) how to securely assign validators to shards, (2) intra-shard consensus, (3) assigning transactions to shards, and (4) processing cross-shard transactions. Our proposed attack exploits the third and fourth challenges of sharding that deal with transactions in a sharding system.

In a simple manner, a transaction is cross-shard if it requires confirmations from more than one shard. In the Unspent Transaction Outputs (UTXO) model used by Bitcoin, each transaction has multiple outputs and inputs where an output dictates the amount of money that is sent to a Bitcoin address. Each of the outputs can be used as an input to another transaction. To prevent double-spending, an output can be used only once. Denote $tx$ as a transaction with two inputs $tx_1$ and $tx_2$, this means $tx$ uses one or more outputs from transaction $tx_1$ and $tx_2$. Let $S_{1}, S_2,$ and $S_3$ be the shards containing $tx_1, tx_2,$ and $tx$, respectively, we refer $S_1$ and $S_2$ as the \textit{input shards} of $tx$, and $S_3$ as the \textit{output shard}. If these three shards are the same, $tx$ is an in-shard transaction, otherwise $tx$ is cross-shard.

To determine the output shard of a transaction, most sharding protocols use the hash value of the transaction to calculate the ID of the output shard. By leveraging the hash value, the transactions are effectively assigned to shards in a uniformly random manner. However, we show that this mechanism can be manipulated to perform a DoS attack.

To process cross-shard transactions, several cross-shard validation mechanisms have been proposed \cite{kokoris2018omniledger,zamani2018rapidchain,wang2019monoxide}. A cross-shard validation mechanism determines how input and output shards can coordinate to validate a cross-shard transaction. This makes the process of validating cross-shard transactions particularly expensive since the transaction must wait for confirmations from all of its input shards before it can be validated in the output shard. \Cref{fig:crosstx} illustrates this process. Our attack takes advantage of this mechanism to cause a cascading effect that creates a negative impact on shards that are not being attacked.

\begin{figure}
    \centering
    \includegraphics[width=\linewidth]{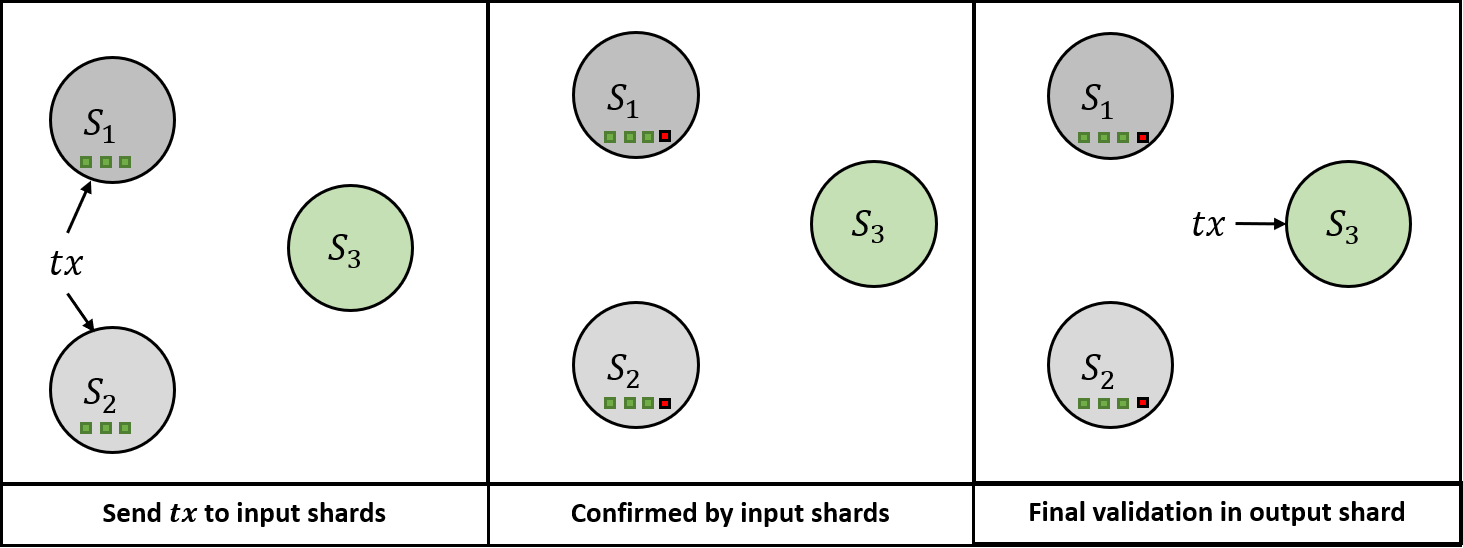}
    \caption{Processing cross-shard transaction $tx$. $tx$ has input shards $S_1$, $S_2$, and output shard $S_3$. $tx$ has to wait to be confirmed in $S_1, S_2$ 
    before it can be validated in $S_3$.
    }
    \label{fig:crosstx}
\end{figure}

{\bf Denial-of-service and flooding attacks.}
Denial-of-service (DoS) is commonly defined as an intentional attack on availability and it has been around for decades \cite{mirkovic2004taxonomy}. In general, it is hard to defend against the DoS attacks and we typically can only mitigate them. A common mitigation technique is based on anomaly detection to filter out DoS attack packets \cite{tan2013system,YU20084212,biron2018real}.

In the context of Bitcoin, DoS typically takes the form of a flooding attack that overwhelms the system with a flood of transactions. Over the years, we have observed the economic impact of this attack as Bitcoin has been flooded with dust transactions by malicious users to make legitimate users pay higher mining fees \cite{btc_2018}. 
There exist some variants of flooding attack that aim to overwhelm an entire blockchain system \cite{saad2019mempool,baqer2016stressing}, not a single shard. The main concept of the attack is to send a huge amount of transactions to overwhelm the mempool, fill blocks to their maximum size, and effectively delay other transactions. Typically, unconfirmed transactions are stored in the mempools managed by blockchain validators. In contrast to the limited block size, the mempool size has no size limit.


This kind of attack requires the attacker to flood the blockchain system at a rate that is much greater than the system throughput. Intuitively, such an attack is not effective on a sharding system because its throughput is exceedingly high. 
In this paper, we show how attackers can manipulate the transaction’s hash to overwhelm a single shard, thereby damaging the entire blockchain through cascading effects caused by cross-shard transactions. 

{\bf Blockchain on Trusted Execution Environment (TEE).}
A key building block of our countermeasure is TEE. Memory regions in TEE are transparently encrypted and integrity-protected with keys that are only available to the processor. TEE's memory is also isolated by the CPU hardware from the rest of the host's system, including high-privilege system software. Thus the operating system, hypervisor, and other users cannot access the TEE’s memory. 
Among available implementations of TEE, Intel SGX \cite{intel2014software} supports generating remote attestations that are used to prove the correct execution of programs running inside TEE.

There has been a recent growth in adopting TEEs to improve blockchains \cite{matetic2019bite,lind2019teechain,cheng2019ekiden,das2019fastkitten}, but not sharding systems. Teechain \cite{lind2019teechain} proposes an improvement over the off-chain payment network in Bitcoin using TEE to enable asynchronous blockchain access. BITE \cite{matetic2019bite} leverages TEE to further enhance the privacy of Bitcoin's clients. In \cite{das2019fastkitten,cheng2019ekiden}, the authors develop secure and efficient smart contract platforms on Bitcoin and Ethereum, respectively, using TEE as a module to execute the contract's code.

We argue that TEE can be used to develop an efficient countermeasure for the single-shard flooding attack in which transaction sharding algorithms can be securely executed inside a TEE module. However, since existing solutions are designed to address some very specific issues such as smart contracts or payment networks, applying them to blockchain {\em sharding} systems is not straightforward.

\section{Single-shard flooding attack}\label{sec:Attack}
In this section, we describe our proposed single-shard flooding attack on blockchain sharding starting with the threat model and detail on performing the attack. Then, we present some preliminary analysis of the attack to illustrate its potential impact and practicality.

\subsection{Threat Model} \label{ssec:threat}
{\bf Attacker.} We use Bitcoin-based sharding systems, such as OmniLedger, RapidChain, and Elastico, as the attacker's target. We consider an attacker who is a client of the blockchain system such that: 
\begin{enumerate}
    \item The attacker possesses enough amount of Bitcoin addresses to perform the attack. In practice, Bitcoin addresses can be generated at no cost.
    \item The attacker has spendable bitcoins in its wallet and the balance is large enough to issue multiple transactions between its addresses for this attack. Each of the transactions is able to pay the minimum relay fee $minRelayTxFee$. We will discuss the detailed cost in the next section.
    \item The attacker is equipped with software that is capable of generating transactions at a rate that is higher than a shard's throughput, which will be discussed in the subsequent section.
    \item Since this is a type of DoS attack, to prevent it from being blocked by the blockchain network, the attacker can originate the attack from multiple sources. The attacker can also leverage some kind of anonymous communication when connecting to the Bitcoin network to prevent the network packets from revealing the attacker’s identity. Therefore, we assume that the attacker can remain anonymous and untraceable.
\end{enumerate}

{\bf Goals of attacks.}
By employing the concept of flooding attack, the main goal of the single-shard flooding attack is to overwhelm a single shard by sending a huge amount of transactions to that shard. The impact of this attack has been widely studied on non-sharded blockchains like Bitcoin such that it can reduce the system performance by delaying the verification of legitimate transactions and eventually increase the transaction fee.

Furthermore, with the concept of cross-shard transactions where each transaction requires confirmation from multiple shards, an attack to overwhelm one shard could affect the performance of other shards and reduce the system's performance as a whole. For example, in \Cref{fig:crosstx}, if $S_1$ were under attack, the transaction validation would also be delayed in $S_3$. Previous work \cite{zamani2018rapidchain} shows that placing transactions using their hash value could result in 99.98\% of cross-shard transactions. Since the throughput of one shard is limited, under our attack, it could effectively become the performance bottleneck of the whole system. Therefore, to make the most out of this scheme, the attacker would target the shard that has the lowest throughput in the system.

{\bf How to perform attack.}
In most Bitcoin-based sharding systems, such as OmniLedger, RapidChain, and Elastico, the hash of a transaction determines which shard to put the transaction in. Specifically, the ending bits of the hash value indicate the output shard ID. The main idea of our attack is to have the attackers manipulate the transaction's hash in order to place it into the shard that they want to overwhelm. To accomplish this, we conduct a brute-force generation of transactions by alternating the output addresses of a transaction until we find an appropriate hash value. 

Let $T$ be the shard that the attacker wants to overwhelm. We define a "malicious transaction" as a transaction whose hash was manipulated to be put in shard $T$. Denote $tx$ as a transaction, $tx.in$ is the set of input addresses, and $tx.out$ is the set of output addresses. We also denote $O$ as the set of attacker's addresses, $I \subseteq O$ as the set of attacker's addresses that are holding some bitcoins. Let $\mathcal{H}(\cdot)$ be the SHA-256 hash function (its output is indistinguishable from that of a random function), \Cref{algo:tx} describes how to generate a malicious transaction in a system of $2^N$ shards.

Starting with a raw transaction $tx$, the algorithm randomly samples a set of input addresses for $tx.in$ from $I$ such that the balance of those addresses is greater than the minimum relay fee. It then randomly samples a set of output addresses for $tx.out$ from $O$ and set the values for $tx.out$ so that $tx$ can pay the minimum relay fee. The hash value of the transaction is determined by double hashing the transaction's data using the SHA-256 function. It checks if the final $N$ bits indicate $T$ ($\mathbin{\&}$ denotes a bitwise AND), if that is true, it outputs the malicious $tx$ that will be placed into shard $T$. Otherwise, it re-samples another set of output addresses for $tx.out$ from $O$.

\begin{algorithm}
    \caption{Generate a malicious transaction}
    \label{algo:tx}
    \hspace*{\algorithmicindent} \textbf{Input:} $\mathcal{I}, \mathcal{O}, N, T$ \\
    \hspace*{\algorithmicindent} \textbf{Output:} A malicious transaction $tx$
	\begin{algorithmic}[1]
	    \State $tx \gets$ raw transaction
	    \State $tx.in \getsr \mathcal{I}$
		\While {$\mathcal{H}(\mathcal{H}(tx)) \mathbin{\&} (\mathbin{1}^{256-N} \:\|\: \mathbin{0}^{N}) \neq T$}
		    \State $tx.out \getsr \mathcal{O}$
		    \State Set values for $tx.out$ to satisfy the $minRelayTxFee$.
		\EndWhile
		\State \textbf{Ret} $tx$
	\end{algorithmic}
\end{algorithm}

\subsection{Preliminary Analysis} \label{ssec:analysis}
\subsubsection{Capability of generating malicious transactions}
In this section, we demonstrate the practicality of the attack by assessing the capability of generating malicious transactions on a real machine. Suppose we have $2^N$ shards and a transaction $tx$, that means we will use $N$ ending bits of $\mathcal{H}(tx)$ to determine its shard. Suppose we want to put all transactions into shard 0, we need to generate some malicious transactions $tx$ such that the last $N$ bits of $\mathcal{H}(tx)$ must be 0. We calculate the probability of generating a malicious transaction as follows. As a SHA-256 hash has 256 bits, the probability of generating a hash with $N$ ending zero bits will be $\frac{2^{256-N}}{2^{256}}=\frac{1}{2^N}$. Therefore, we expect to obtain 1 malicious transaction per generating $2^N$ transactions. That means if we have 16 shards, we can obtain 1 malicious tx (i.e., the last 4 bits are zero) per generating 16 transactions.

To see the capability of generating malicious transactions, we conduct an experiment on an 8th generation Intel Core i7 laptop. The program to generate transactions is written in C++ and runs with 8 threads. When the number of shards is 64, the program can generate up to 1,644,736 transaction hashes per second, of which there are 26,939 malicious transactions (8 ending bits are zero). In short, within 1 second, a laptop can generate about 26,939 malicious transactions, which is potentially much more than the throughput of one shard. \Cref{tab:maltx} shows the number of malicious transactions generated per second with respect to the number of shards. Note that, in practice, an attacker can easily produce much higher numbers by using a more highly capable machine with a faster CPU.

\begin{table}[]
\centering
\caption{Capability of generating malicious tx with respect to the no. of shards on an Intel Core i7 laptop with 8 threads.}
\label{tab:maltx}
\begin{tabular}{@{}cc@{}}
\toprule
\multicolumn{1}{l}{No. of shards} & \multicolumn{1}{l}{No. of malicious tx per sec} \\ \midrule
2                                 & 823,512                                         \\
4                                 & 412,543                                         \\
8                                 & 205,978                                         \\
16                                & 103,246                                         \\
32                                & 52,361                                          \\
64                                & 26,939                                          \\ \bottomrule
\end{tabular}
\end{table}

\subsubsection{Cost of attacks}  The default value of $minRelayTxFee$ in Bitcoin is 1,000 satoshi per kB, which is about \$0.10 (as of Feb 2020). Taking into account that the average transaction size is 500 bytes, each transaction needs to pay \$0.05 as the $minRelayTxFee$. Our experiments below show that generating 2,500 malicious transactions is enough to limit the throughput of the whole system by that of the attacked shard. Hence, the attacker needs about \$125 to perform the attack effectively. Furthermore, by paying the minimum relay fee without paying the minimum transaction fee, the malicious transactions will still be relayed to the attacked shard's mempool but will not be confirmed, thereby retaining the starting balance.

\subsubsection{Cascading effect of the single-shard flooding attack}
We estimate the portion of transactions that are affected by the attack. A transaction is affected if one of its input shards or the output shard is the attacked shard. In \cite{manuskin2020ostraka}, the authors calculate the ratio of transactions that could be affected when one shard is under attack. Specifically, when considering a system with $n$ shards and transactions with $m$ inputs, the probability of a transaction to be affected by the attack is $1-(\frac{n-1}{n})^{m+1}$. The result is illustrated in \Cref{fig:impact}. As can be seen, a typical transaction with 2 or 3 inputs has up to 70\% chance of being affected with 4 shards. However, with 16 shards about 20\% of the transactions are still affected. Note that this number only represents the transactions that are "directly" affected by the attack, the actual number is higher when considering transactions that depend on the delayed transactions. 

\begin{figure}
	\centering
	\includegraphics[width=0.7\linewidth]{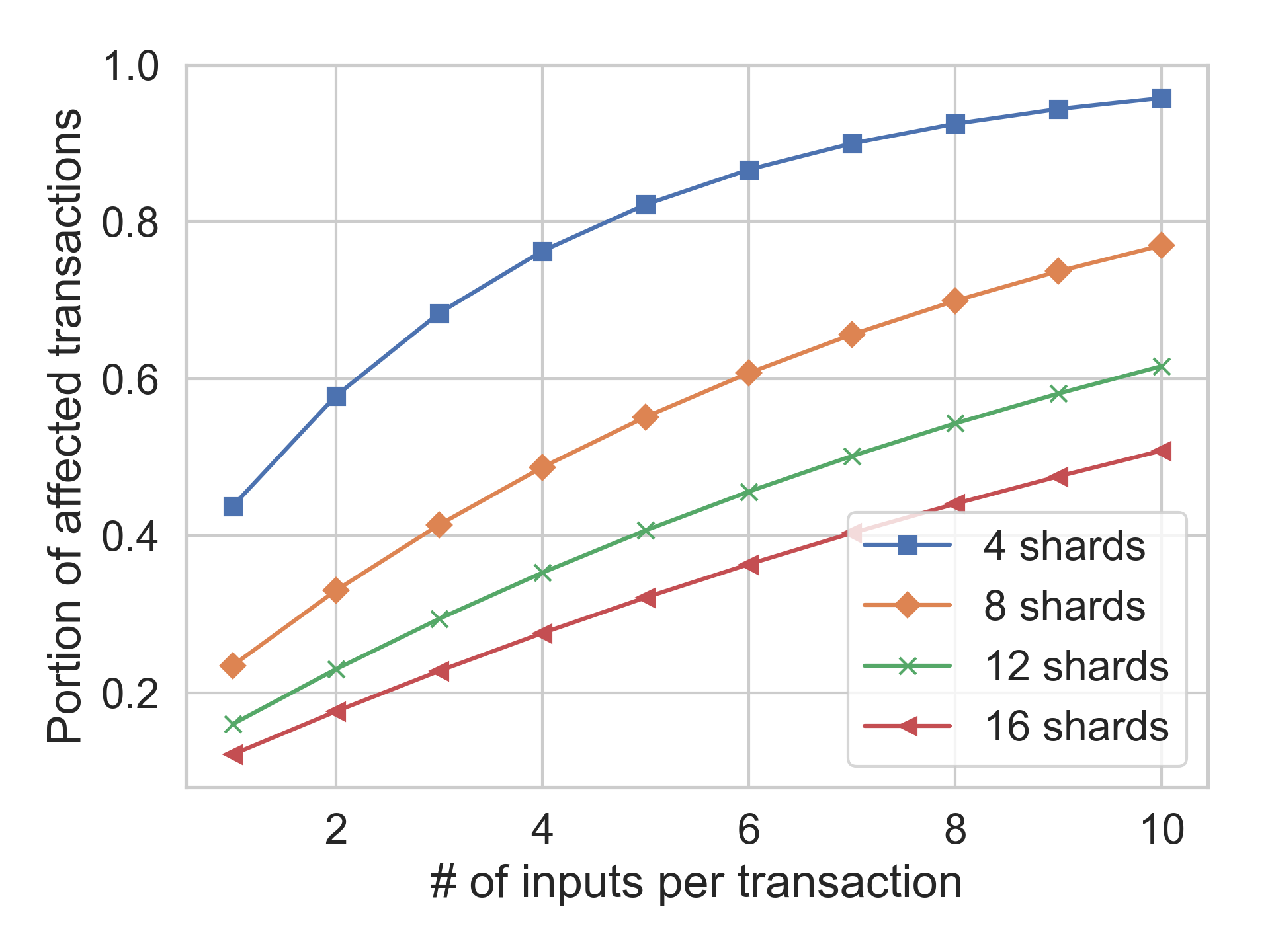}
	\caption{Affected transactions by the single-shard flooding attack}
	\label{fig:impact}
\end{figure}

Even though the analysis shows that the number of affected transactions is less at 16 shards than at 4 shards, in fact, the attack does much more damage to the 16-shard system. The intuition of this scenario is that as we increase the number of shards, we also increase the number of input shards per transaction. Since a transaction has to wait for the confirmations from all of its input shards, an affected transaction in the 16-shard system takes more time to be validated than that in the 4-shard system. The experiments in the next section will illustrate this impact in more detail.

\section{Analyzing the Attack's Impacts} \label{sec:analyze}
In this section, we present a detailed analysis of our attack, especially how it impacts the system performance as a whole. Before that, we describe the design of our simulator that is developed to analyze the performance of a blockchain sharding system.

\subsection{Simulator}
Our implementation was based on SimBlock \cite{aoki2019simblock}, a discrete-event Bitcoin simulator that was designed to test the performance of the Bitcoin network. SimBlock is able to simulate the geographical distribution of Bitcoin nodes across six regions (North America, South America, Europe, Asia, Japan, Australia) of which the bandwidth and propagation delay are set to reproduce the actual behavior of the Bitcoin network. Nevertheless, SimBlock fails to capture the role of transactions in the simulation, which is an essential part in evaluating the performance of blockchain sharding systems.

Our work improves SimBlock by taking into consideration the Bitcoin transactions and simulating the behavior of sharding. \Cref{fig:class} shows a simple UML class diagram depicting the relations between components of our simulator. As can be seen later, our simulator can be easily used to evaluate the performance of any existing or future sharding protocols.

\begin{figure}
	\centering
	\includegraphics[width=0.7\linewidth]{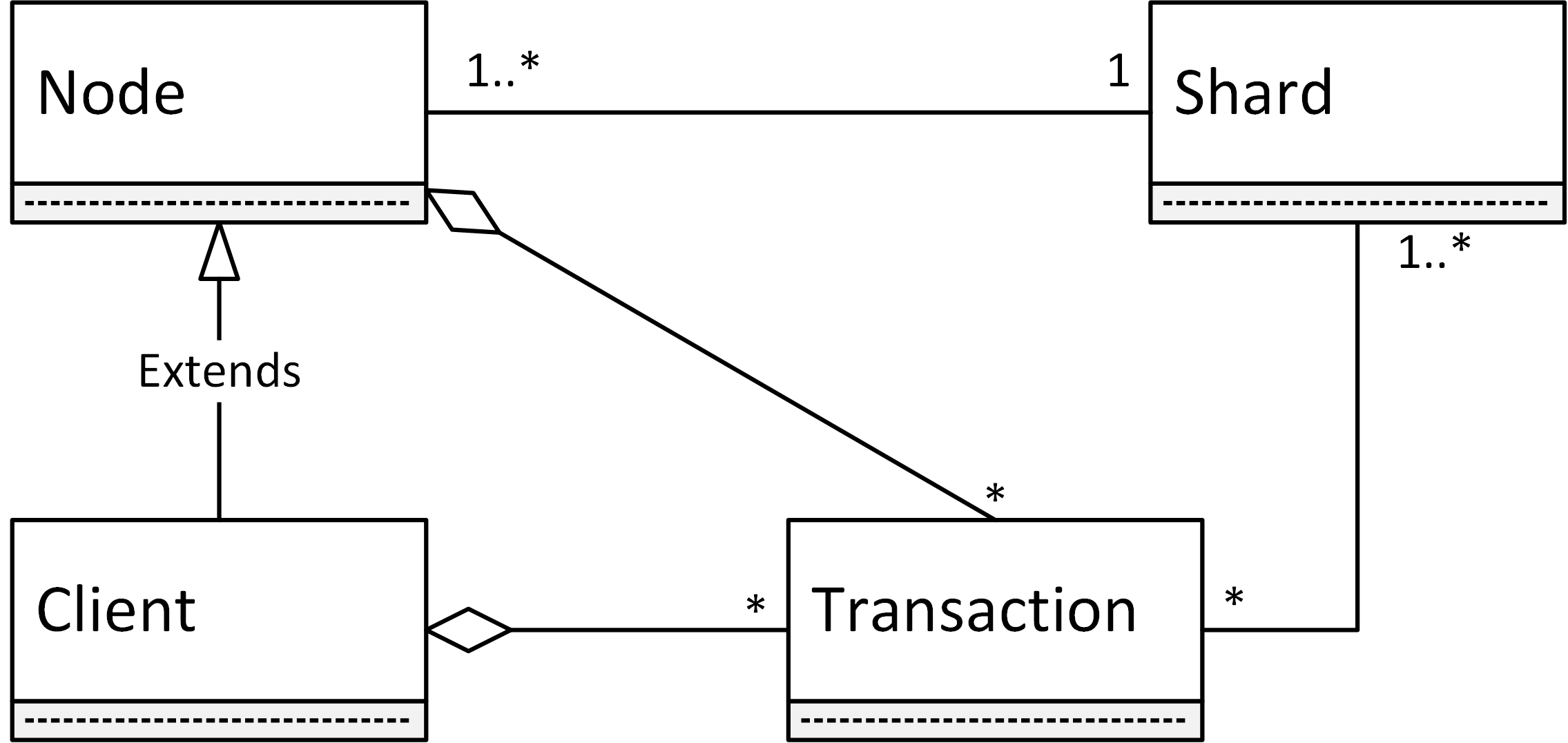}
	\caption{UML class diagram of the simulator}
	\label{fig:class}
\end{figure}

\textbf{Transactions.} The sole purpose of SimBlock was only to show the block propagation so the authors did not consider transactions. To represent Bitcoin transactions, we adopt the Transaction-as-Nodes (TaN) network proposed in \cite{nguyen2019optchain}. Each transaction is abstracted as a node in the TaN network, there is a directed edge $(u,v)$ if transaction $u$ uses transaction $v$ as an input. In our simulator, each transaction is an instance of a Transaction class and can be directly obtained from the Bitcoin dataset. At the beginning of the simulation, a Client instance loads each transaction from the dataset and sends them to the network to be confirmed by Nodes. Depending on the transaction sharding algorithm, each transaction could be associated with one or more shards.

Furthermore, our simulator can also emulate real Bitcoin transactions in case we need more transactions than what we have in the dataset or we want to test the system with a different set of transactions. With regard to sharding, the two important factors of a transaction are the degree and the input shards. From the Bitcoin dataset of more than 300 million real Bitcoin transactions, we fit the degree distribution with a power-law function as in \Cref{fig:degree} (black dots are the data, and the blue line is the resulting power-law function). The resulting function is $y = 10^{6.7}x^{-2.3}$. \Cref{fig:inputshards} shows the number of input shards with 16 shards (using hash-based transaction sharding) that is also fitted with a power-law function. The resulting function is $y = 10^{7.2}x^{-2.2}$. Hence, the Client can use these distributions to sample the degree and input shards when generating transactions that resemble the distribution of the real dataset.



\begin{figure}
	\centering
	\subfloat[Degree distribution of the Bitcoin transactions. Fitted power-law $y = 10^{6.7}x^{-2.3}$.]{
	    \includegraphics[width=0.45\linewidth]{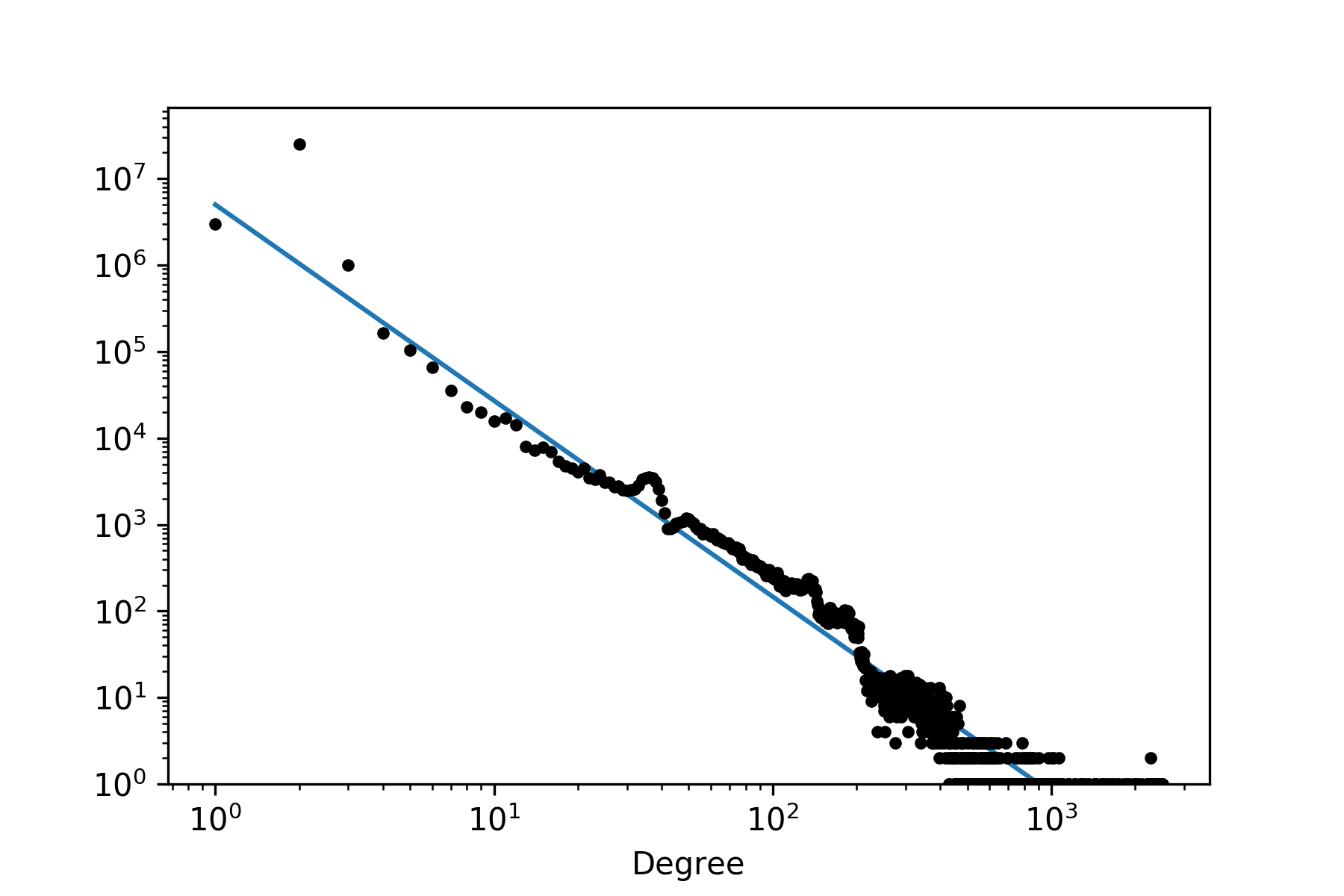}
    	\label{fig:degree}
	}%
	\hfill
	\subfloat[Distribution of Bitcoin transactions' number of input shards. Fitted power-law $y = 10^{7.2}x^{-2.2}$.]{
	    \includegraphics[width=0.45\linewidth]{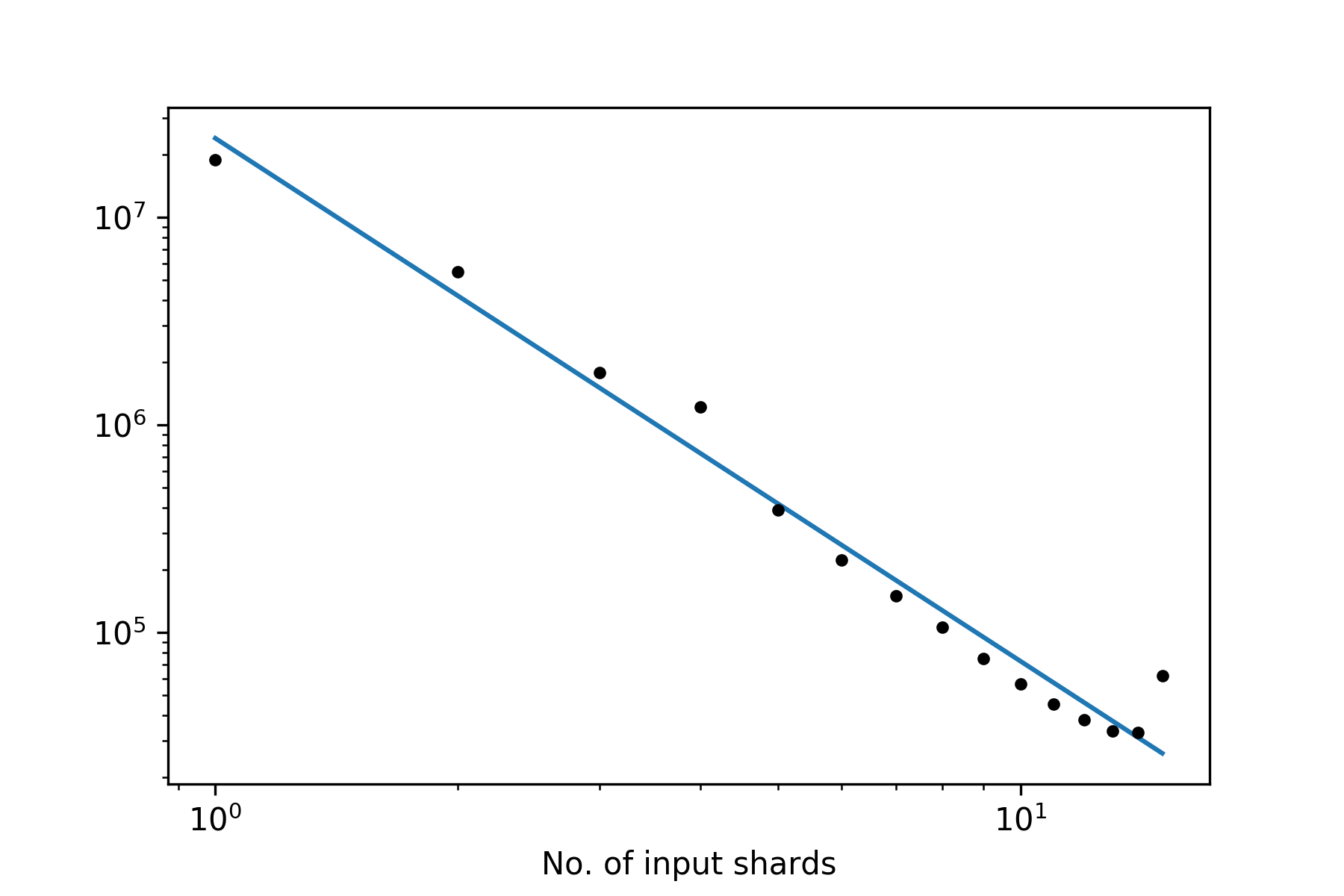}
	    \label{fig:inputshards}
    }
	\caption{Bitcoin transactions. The black dots are the data, the blue line shows a fitted power-law function}
	\label{fig:bitcontx}
\end{figure}

\textbf{Sharding.} After the simulator generates Node instances, each of them is distributed into an instance of Shard. All nodes in a shard share the same ledger and a mempool of unconfirmed/pending transactions. In the class Node, we implement a cross-shard validation algorithm that decides how nodes in different shards can communicate and confirm cross-shard transactions. In the current implementation, we use the mechanism proposed in \cite{kokoris2018omniledger} to process cross-shard transactions. 

For each Node instance, upon receiving a transaction, it will relay the transaction to the destination shard. When the transaction reaches the shard, it will be stored in the mempool of the Node instances in that shard. Each transaction in the mempool is then validated using an intra-shard consensus protocol.

\textbf{Use cases of the simulator.} Besides being used to test the impact of our proposed attack, researchers can also use the simulator to evaluate the performance of multiple blockchain sharding systems. By far, most experiments on blockchain sharding have to be run on numerous rented virtual machines \cite{zamani2018rapidchain,kokoris2018omniledger,luu2016secure,wang2019monoxide}, this is notably costly and complicated to set up. Without having to build the whole blockchain system, our simulator is particularly useful when researchers need to test various algorithms and system configurations on blockchain long before deploying the real system. 

By using simulation, various setups can be easily evaluated and compared, thereby making it possible to recognize and resolve problems without the need of performing potentially expensive field tests. By exploiting a pluggable design, the simulator can be easily reconfigured to work with different algorithms on transaction sharding, cross-shard validation, validators assignment, and intra-shard consensus protocol.

\subsection{Experimental Evaluations}
Our experiments are conducted on 10 million real Bitcoin transactions by injecting them into the simulator at some fixed rates. We generate 4000 validator nodes and randomly distribute them into shards. In the current Bitcoin setting, the block size limit is 1 MB, and the average size of a transaction is 500 bytes, hence, each block contains approximately 2000 transactions. We evaluate the system performance with 4, 8, 12, and 16 shards, which are the number of shards that were used in previous studies \cite{kokoris2018omniledger,zamani2018rapidchain,nguyen2019optchain}.

\subsubsection{Throughput} The experiment in this section illustrates how malicious transactions affect the system throughput. In order to find out the best throughput of the system, we gradually increase the transaction rate (i.e., the rate at which transactions are injected into the system) and observe the final throughput until the throughput stops increasing. At 16 shards, the best throughput is about 4000 tps, which is achieved when the transaction rate is about 5000 tps. For this experiment, we fix the rate at 5000 tps so that the system is always at its best throughput with respect to the number of shards.

To perform the attack, the attacker runs \Cref{algo:tx} to generate some portions of malicious transactions into shard 0. For example, if 10\% of transactions are malicious, then at each second, 500 transactions will be put into shard 0, and the rest 4,500 txs are distributed into shards according to their hash value. The results are shown in the \Cref{fig:thru-shard}.

\begin{figure}
	\centering
	\subfloat[Impact on system throughput]{
	    \includegraphics[width=0.5\linewidth]{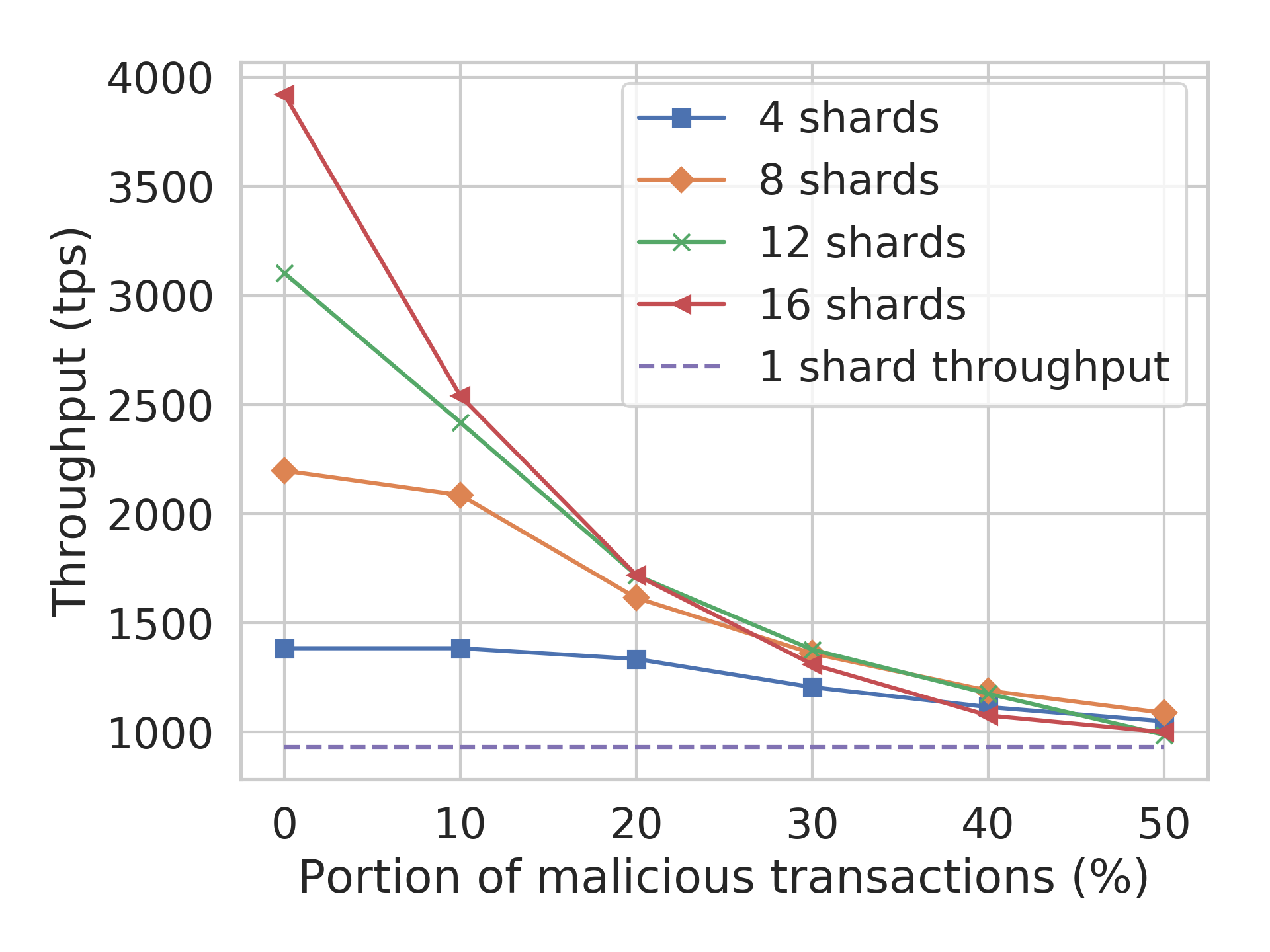}
    	\label{fig:thru-shard}
	}%
	\subfloat[Impact on system latency]{
	    \includegraphics[width=0.5\linewidth]{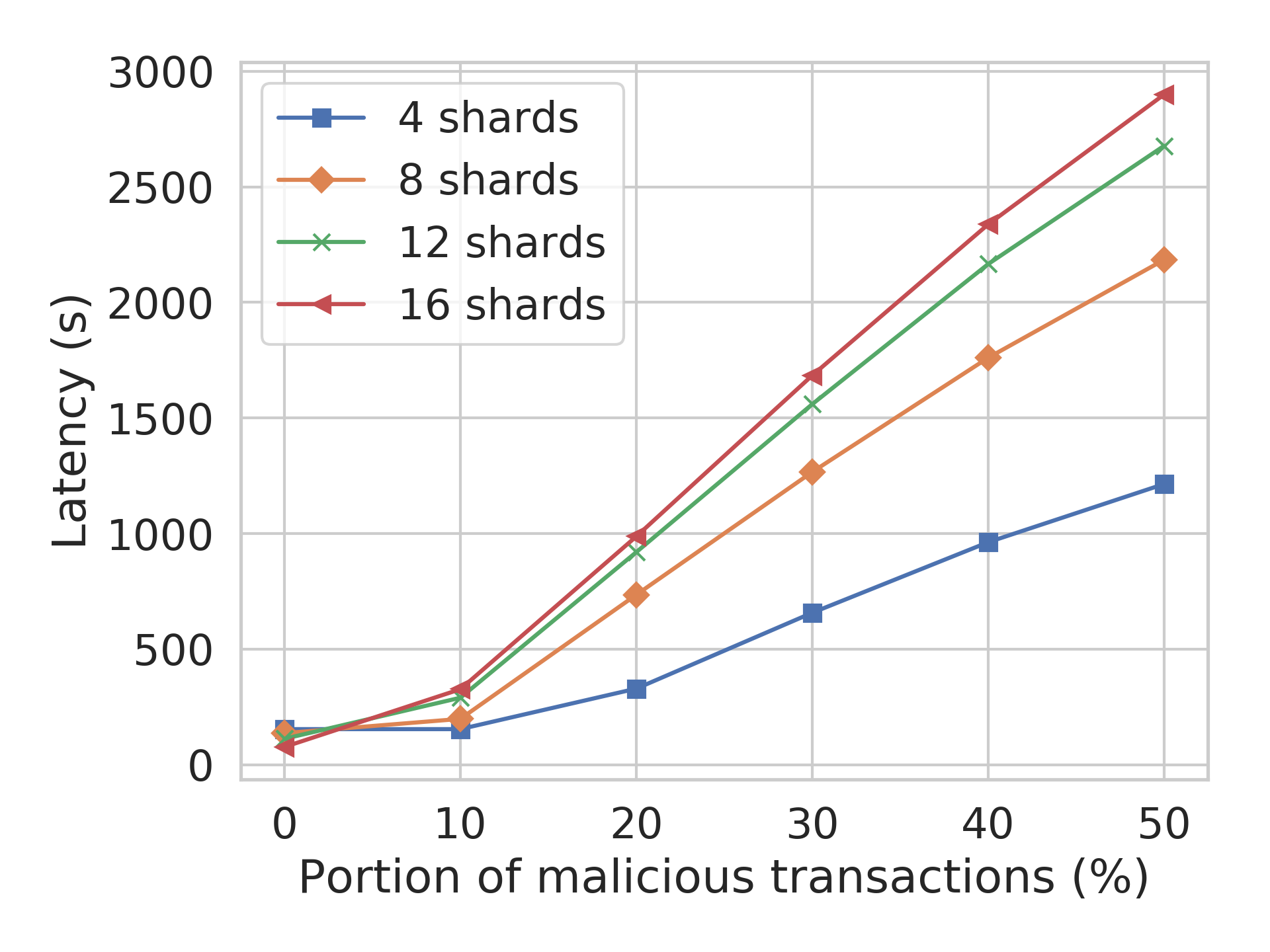}
	    \label{fig:lat-shard}
    }
	\caption{Impact on system throughput and latency}
	\label{fig:shard}
\end{figure}


At 0\%, the system is not under attack, the system achieves its best throughput with respect to the number of shards. The horizontal dashed line illustrates the throughput of 1 shard, which is the lower bound of the system throughput. As can be seen, when we increase the number of malicious transactions, the system throughput rapidly decreases. This behavior can be explained as we have multiple cross-shard transactions that are associated with the attacked shard, their delays could produce a severe cascading effect that ends up hampering the performance of other shards. Thus, the throughput as a whole is diminished. Moreover, we can observe that higher numbers of shards are more vulnerable to the attack. With 16 shards, the performance reduces exponentially as we increase the portion of malicious transactions to 50\%. Specifically, with only 20\% malicious transactions, the throughput was reduced by more than half. This behavior confirms our prior preliminary analysis. 

Another interesting observation is that at 50\% malicious transactions, the throughput nearly reaches its lower bound, hence, the sharding system would not be any faster than using only 1 shard. At this time, the attacker has accomplished its goal, that is, making the attacked shard the bottleneck of the whole system. 50\% malicious transactions translates to 2,500 malicious transactions that are sent to the system at each second, previous experiments and analysis in \Cref{tab:maltx} show that a normal laptop could easily generate more than 100,000 malicious transactions per second.

\subsubsection{Latency} 
We analyze the impact of the single-shard flooding attack on the system latency, which is the average amount of time needed to confirm a transaction. To avoid backlog when the system is not under attack, for each number of shards, the transaction rate is set to match the best system throughput. The results are shown in the \Cref{fig:lat-shard}. In the same manner as the previous experiment, the attack effectively increases the latency of the system as a whole. We shall see in the next experiment that the attack creates a serious backlog in the mempool of the shards, thereby increasing the waiting time of transactions and eventually raising the average latency.

This experiment also corroborates the experiment on system throughput as greater numbers of shards are also more vulnerable to the attack. With 16 shards, although it provides the fastest transaction processing when the system is not under attack, nevertheless, it becomes the slowest one even with only 10\% of malicious transactions. Additionally, when the attacker generates 20\% malicious transactions, the latency is increased by more than 10 times. Therefore, we can conclude that although adding more shards would help improve the system performance, under our attack, the system would become more vulnerable.


\subsubsection{Queue/Mempool size}
In the following experiments, we investigate the impact of this single-shard flooding attack on the queue (or mempool) size of shard 0, which is the shard that is under attack. This gives us insights into how malicious transactions cause a backlog in the shard. Firstly, we fix the number of shards and vary the portion of malicious transactions. \Cref{fig:queue-ratio} illustrates the queue size over time of shard 0 with a system of 16 shards where each line represents the portion of malicious transactions. When the system is not under attack, the queue size is stable with less than 15,000 transactions at any point in time. As we put in only 10\% malicious transactions, the queue size reaches more than 2 million transactions. 

Note that under our attack, the transactions are injected into shard 0 as a rate that is much higher than its throughput, thus, the queue will keep on increasing until all transactions have been injected. At this point, transactions are no longer added to the shard and the shard is still processing transactions from the queue, hence, the queue size decreases. This explains why the lines (i.e., queue size) go down towards the end of the simulation.

The result also demonstrates that the congestion gets worse as we increase the malicious transactions. Due to the extreme backlog, transactions have to wait in the mempool for a significant amount of time, thereby increasing their waiting time. This explains the negative impact of malicious transactions on system throughput and latency.

\begin{figure}
	\centering
	\subfloat[Number of shards = 16]{
	    \includegraphics[width=0.5\linewidth]{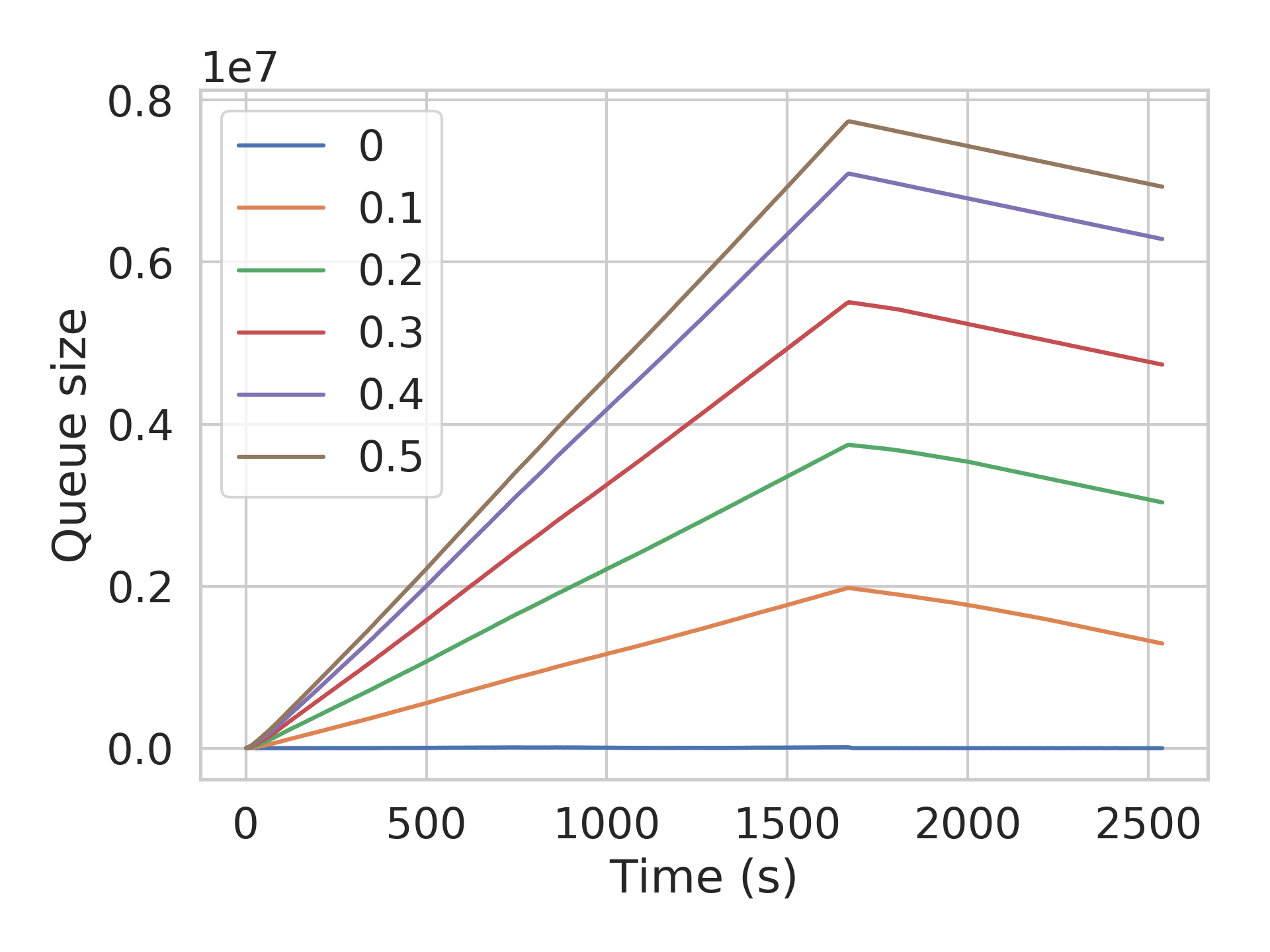}
    	\label{fig:queue-ratio}
	}%
	\subfloat[20\% malicious transactions]{
	    \includegraphics[width=0.5\linewidth]{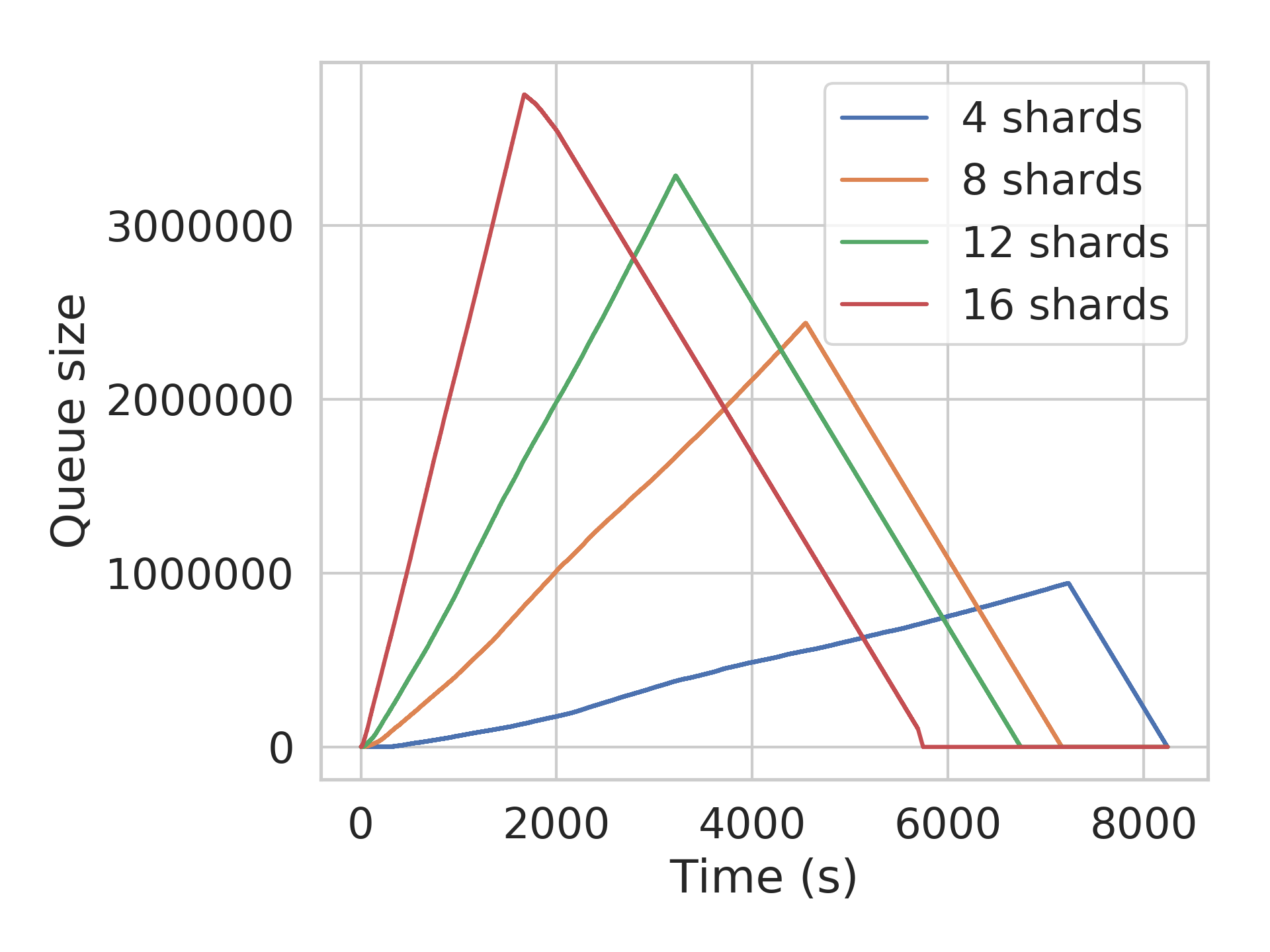}
	    \label{fig:queue-shard}
    }
	\caption{Impact on the attacked shard's queue size}
	\label{fig:queue}
\end{figure}



Next, we observe the queue size with different numbers of shards. \Cref{fig:queue-shard} presents the impact of the attack with 20\% malicious transactions at different numbers of shards. As can be seen, when we increase the number of shards, the backlog of transactions builds up much faster and greater due to the fact that we are having more cross-shard transactions. This result conforms to our previous claim that greater numbers of shards are more vulnerable to the attack.

\subsubsection{Summary} The experiments presented in this section have shown that our attack effectively reduce the performance of the whole system by attacking only a single shard. By generating malicious transactions according to \Cref{algo:tx}, the attacker easily achieves its goal of limiting the system performance to the throughput of one shard. Our preliminary analysis in \Cref{ssec:analysis} shows that the attacker is totally capable of generating an excessive amount of malicious transactions at low cost, thereby demonstrating the practicality of the attack.

\section{Countermeasure}\label{sec:countermeasure}
As we have argued that using hash values to determine the output shards is susceptible to the single-shard flooding attack, we delegate the task of determining the output shards without using hash values to the validators. To achieve that task, we consider the validators running a deterministic transactions sharding algorithm. The program takes the form of $S_{out} = txsharding(tx, st)$ in which it ingests as inputs a blockchain state $st$ and the transaction $tx$, and generates the output shard ID $S_{out}$ of $tx$ calculated at state $st$. Moreover, the algorithm $txsharding$ is made public. The requirement for $txsharding$ is that it is \textit{deterministic} and has a \textit{load-balancing} mechanism to balance the load among the shards. Finally, we assume that $txsharding$ does not use a transaction's hash as the basis for determining its output shard.

A simple approach to implement $txsharding$ is to devise an algorithm in which, for each input $tx$, $S_{out}$ is determined by choosing the shard that is currently having the least number of transactions. In other words, $txsharding$ always distributes the transactions into shards evenly, thereby balancing the load among the shards. For an optimal $txsharding$, Nguyen et al. \cite{nguyen2019optchain} proposes OptChain, a transactions placement algorithm that can both balance the load and minimize the number of cross-shard transactions. A technical overview of Optchain is given in Appendix B. With the load-balancing mechanism of $txsharding$, the attacker can no longer flood any one shard with a superfluous amount of transactions due to the fact that the transactions are always distributed evenly into shards. 

However, the main challenge is how to run $txsharding$ so that we can prevent adversaries from manipulating its output. By the nature of blockchain, the validators are untrusted, a straw-man approach is to run $txsharding$ on-chain and let the validators reach consensus on the output of $txsharding$. Hence, the validators would have to reach consensus on every single transaction. Nonetheless, this alone dismisses the original idea of sharding, which is to improve blockchain by parallelizing the consensus work and storage, i.e., each validator only handles a disjoint subset of transactions. To avoid costly on-chain consensus on the output of $txshading$, we need to execute the algorithm off-chain while ensuring that the operation is tamper-proof in the presence of malicious validators. To tackle this challenge, in this work, we leverage the Trusted Execution Environment (TEE) to securely execute the transaction sharding algorithm on the validators.

TEE in a computer system is realized as a module that performs some verifiable executions in such a way that no other applications, even the OS, can interfere. Simply speaking, a TEE module is a trusted component within an untrusted system. In this work, we consider using TEE that supports issuing remote attestations proving the integrity of the software running inside the TEE module. 
Intel SGX \cite{intel2014software} is one of the most commonly used implementations of TEE in which remote attestation is well-supported. However, TEE does not offer satisfactory availability guarantees as the hardware could be arbitrarily terminated by a malicious host or simply by losing power.

In the proposed countermeasure, validators are equipped with TEE modules that assist clients in determining the transactions' output shard with an attestation to prove the correctness of execution (most modern Intel CPUs from 2014 support Intel SGX). When a client issues a transaction to a validator, the validator will run its TEE module to get the output shard of that transaction, together with an \textit{attestation} to prove the code's integrity and the correctness of the execution. The attestation is a digital signature generated by the TEE's private key (Section \ref{ssec:func}). The client can verify the computation using the attestation and then send the transaction with the attestation to the system in the same manner as issuing an ordinary transaction. With this concept, we can rest assured that an attacker cannot manipulate transactions to overwhelm a single shard. Additionally, we do not need the whole blockchain validators to reach consensus on a transaction's output shard, this computation is instead done off-chain by one or some small amount of validators. Its realization, however, encounters some challenges when using TEE in an untrusted network:

\begin{itemize}
    \item A malicious validator can terminate the TEE at its discretion, which results in losing its state. The TEE module must be designed to tolerate such failure.
    \item Although the computation inside the TEE is trusted and verifiable via attestation, a malicious validator can deceive the TEE module by feeding it with fraudulent data.
\end{itemize}

To overcome these challenges, we aim to design a \textit{stateless} TEE module where any persistent state is stored in the blockchain. To obtain the state from the blockchain, the TEE module acts as a blockchain client to query the block headers from the blockchain, thereby ensuring the correctness of the data (this is how we can exploit the immutability of blockchain to overcome pitfalls of TEE modules). With this design, even when some TEE modules are arbitrarily shut down, the security properties of the protocol are not affected.

\subsection{System Overview and Security Goals}
In this section, we present an overview of our system for the countermeasure and establish some security goals.


\subsubsection{System overview}
Our system considers two types of entities: clients and validators
\begin{itemize}
    \item \textit{Clients} are the end-users of the system who are responsible for generating transactions. The clients are not required to be equipped with a TEE-enabled platform. In fact, the clients in our system are extremely lightweight.
    \item \textit{Validators} in each shard maintain a distributed append-only ledger, i.e. a blockchain, of that shard with an intra-shard consensus protocol. Validators require a TEE-enabled platform to run the transaction sharding algorithm.
\end{itemize}

For simplicity, we assume that a client has a list of TEE-enabled validators and it can send requests to multiple validators to tolerate certain failures. Each TEE-enabled validator has $txsharding$ installed in its TEE module.  We also assume that the TEE module in each of the validators constantly monitors the blockchain from each shard, so that it always has the latest state of the shards.

\begin{figure}
    \centering
    \includegraphics[width=\linewidth]{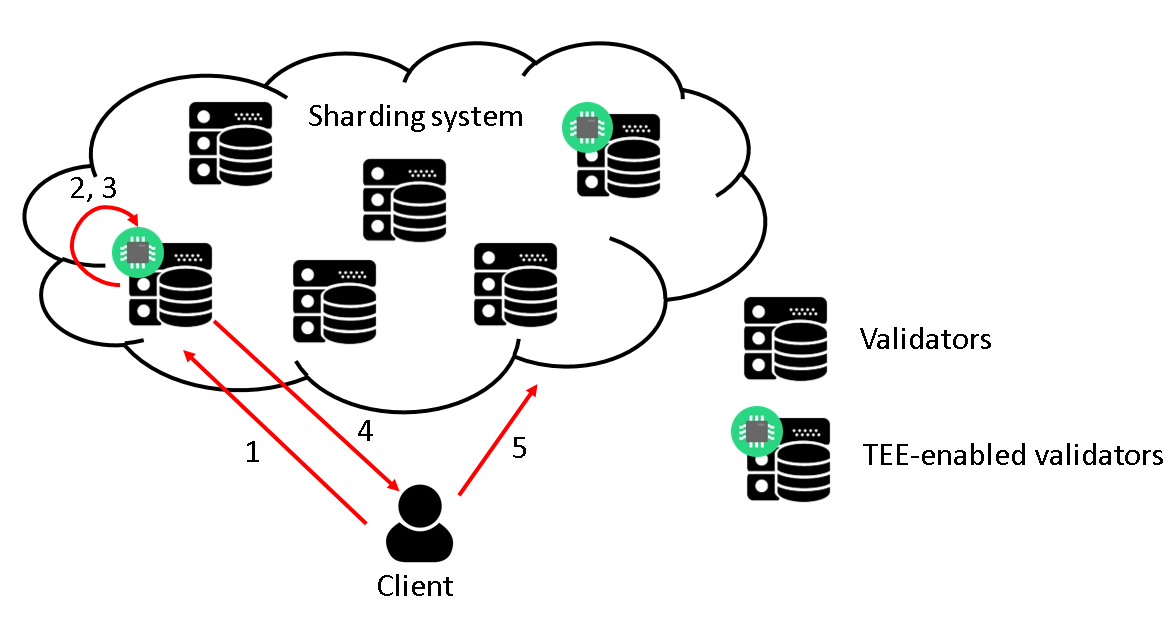}
    \caption{System overview}
    \label{fig:workflow}
\end{figure}

Denoting $ENC_k(m)$ as the encryption of message $m$ under key $k$ and $DEC_k(c)$ as the decryption of ciphertext $c$ under key $k$, the steps for computing the output shard for a transaction is as follows (\Cref{fig:workflow}):
\begin{enumerate}
    \item Client $C$ sends the transaction $tx$ to a TEE-enabled validators.  $C$ obtains the public key $pk_{TEE}$ of a validator's TEE, computes $inp = ENC_{pk_{TEE}}(tx)$, and sends $inp$ to the validator.
    \item The validator loads $inp$ into its TEE module and starts the execution of $txsharding$ in TEE.
    \item The TEE decrypts the $inp$ using its private key, and executes the $txsharding$ using $tx$ and the current state $st$ of the blockchain. Then, the output $S_{out}$ is generated together with the state $st$ upon which $S_{out}$ is determined, and a signature $\sigma_{TEE}$ proving the correct execution.
    \item The validator then send $(S_{out}, st, \sigma_{TEE}, h_{tx})$ to $C$ where $h_{tx}$ is the hash of the transaction. $C$ verifies $\sigma_{TEE}$ before sending $(\sigma_{TEE},tx)$ to the blockchain network for final validation. If $C$ sends a request to more than one validator, $C$ would choose the $S_{out}$ that reflects the latest state. 
    
    Note that $C$ could choose an outdated $S_{out}$, however, other entities can validate if a pair $(S_{out}, st)$ is indeed the output of a TEE. The blockchain system can simply reject transactions whose $S_{out}$ was computed based on an outdated $st$

    \item Upon receiving $(\sigma_{TEE},tx)$, the validators again verify $\sigma_{TEE}$ before proceeding with relaying and processing the transaction.
\end{enumerate}


\subsubsection{Adversarial model and Security goals}\label{sec:countermeasurethreat}
In the threat model in \Cref{ssec:threat}, the attacker only plays the role of a client, however, we stress that the countermeasure must not violate the adversarial model of blockchain, which is working with malicious validators. Thus, in designing the countermeasure system, we extend the previous threat model as follows.

In the same manner as previous work on TEE-enabled blockchain \cite{lind2019teechain,cheng2019ekiden}, we consider an adversary who controls the operating system and any other high-privilege software on the validators. Attackers may drop, interfere, or send arbitrary messages at any time during execution. We assume that the adversary cannot break the hardware security enforcement of TEE. The adversary cannot access processor-specific keys (e.g., attestation and sealing key) and it cannot access TEE's memory that is encrypted and integrity-protected by the CPU. 

The adversary can also corrupt an arbitrary number of clients. Clients are lightweight, they only send requests to the validator the get the output shard of a transaction. They can verify the computation without TEE. We assume honest clients trust their platforms and software, but not that of others. We consider that the blockchain will perform prescribed computation correctly and is always available.

With respect to the adversarial model, we define the security notions of interest as follows:
\begin{enumerate}
    \item Correct execution: the output of a TEE module must reflect the correct execution of $txsharding$ with respect to inputs $tx$ and $st$, despite malicious host.
    \item The system is secure against the aforementioned single-shard flooding attack.
    \item Stateless TEE: the TEE module does not need to retain information regarding previous states or computations.
\end{enumerate}

\subsubsection{Blockchain sharding configuration}
For ease of presentation, we assume a sharding system that resembles the OmniLedger blockchain \cite{kokoris2018omniledger}. Suppose the sharding system has $n$ shards, for each $i \in \{1, 2, ..., n\}$, we denote $BC_i$ and $BH_i$ as the whole ledger and block headers of shard $S_i$, respectively. Furthermore, we assume that each shard $S_i$ keeps track of its UTXO database, denoted by $U_i$. Given a shard $S_i$, each validator in $S_i$ monitors the following database: $BC_i$, $BH_{j\neq i}$, and $U_{1,2, ..., n}$.

As we use an Omniledger-like blockchain, each shard elects a leader who is responsible for accepting new transactions to the shard. For simplicity, we consider that the sharding system provides an API $validate(tx)$ that takes a transaction $tx$ as the input and performs the transaction validation mechanism on $tx$. $validate(tx)$ returns true if $tx$ is successfully committed to the blockchain, otherwise, it returns false.

Additionally, we consider the system uses a signature scheme $\Sigma(G, Sig, Vf)$ that is assumed to be EU-CMA secure (Existential Unforgeability under a Chosen Message Attack). ECDSA is a suitable signature scheme in practice \cite{johnson2001elliptic}. Moreover, the hash function $\mathcal{H}(\cdot)$ used by the system is also assumed to be collision resistant: there exists no efficient algorithm that can find two inputs $a \neq b$ such that $\mathcal{H}(a) = \mathcal{H}(b)$.

Finally, we assume that each TEE generates a public/secret key pair and the public key is publicly available to all entities in the network. In practice, the public keys could be stored in a global identity blockchain.



\subsection{Modeling Functionality of Blockchain and TEE} \label{ssec:func}
We specify the ideal blockchain $\mathcal{F}_{blockchain}$ as an appended decentralized database as in \cref{fig:fb}. $\mathcal{F}_{blockchain}$ stores $\mathbf{DB} = \{DB_i | i \in \{1, 2, ..., n\}\}$ which represents a set of blockchains that are held by shard 1, 2, ..., $n$, respectively. Each blockchain $DB_i$ is indexed by the transactions' hash value $h_{tx}$. We assume that by writing to the blockchain of a shard, all validators of the shard reach consensus on that operation.

\begin{figure}
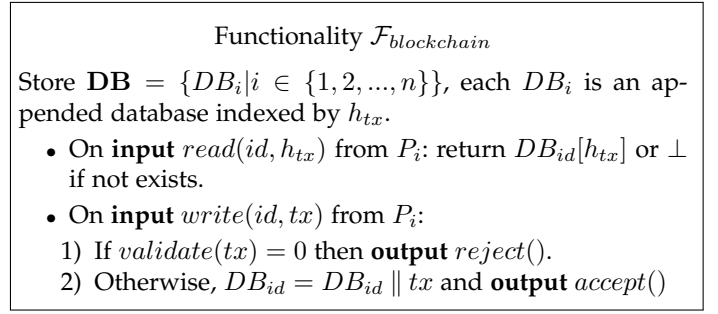

	\noindent\fbox{%
		\parbox{\columnwidth}{%
			\begin{center}
				Functionality $\mathcal{F}_{blockchain}$
			\end{center}
			
			Store $\mathbf{DB} = \{DB_i | i \in \{1, 2, ..., n\}\}$, each $DB_i$ is an appended database indexed by $h_{tx}$.
			
			\begin{itemize}
			    \item On \textbf{input} $read(id, h_{tx})$ from $P_i$: return $DB_{id}[h_{tx}]$ or $\bot$ if not exists.
			\end{itemize}
			\begin{itemize}
				\item On \textbf{input} $write(id, tx)$ from $P_i$:
				\begin{enumerate}
					\item If $validate(tx) = 0$ then \textbf{output} $reject()$.
					\item Otherwise, $DB_{id} = DB_{id} \concat tx$ and \textbf{output} $accept()$
				\end{enumerate}
			\end{itemize}
		}
	}
	\caption{Ideal blockchain $\mathcal{F}_{blockchain}$}
	\label{fig:fb}
\end{figure}

We specify the ideal TEE $\mathcal{F}_{TEE}$ that models a TEE module in \Cref{fig:ftee}, following the formal abstraction in \cite{pass2017formal}. On startup, $\mathcal{F}_{TEE}$ generates a public/secret key pair $(pk_{TEE}, sk_{TEE})$. Then, it publishes $pk_{TEE}$ together with a remote attestation proof proving the integrity of the code in TEE, including the key generation code. With the public key, other entities in the network are able to verify messages signed by $\mathcal{F}_{TEE}$'s secret key. Since $pk_{TEE}$ is bound to the TEE code, signatures under $sk_{TEE}$ can be used to prove the integrity of the execution output. Therefore, they can be used as attestations for the correct execution of TEE.

In practice, TEE platforms like Intel SGX perform the remote attestation as follows. Suppose the execution on TEE results in an output and an attestation $\sigma_{TEE}$, as indicated in \cite{cheng2019ekiden}, the validator sends $\sigma_{TEE}$ to the Intel Attestation Service (IAS) provided by Intel. Then IAS verifies  $\sigma_{TEE}$ and replies with $\pi = (b, \sigma_{TEE}, \sigma_{IAS})$, where $b$ indicates whether $\sigma_{TEE}$ is valid or not, and $\sigma_{IAS}$ is a signature over $b$ and $\sigma_{TEE}$ by the IAS. Since $\pi$ is basically a signature, it can be verified without using TEE or having to contact the IAS.

A TEE module is an isolated software container that is installed with a program that, in this work, is a transaction sharding algorithm. $\mathcal{F}_{TEE}$ abstracts a TEE module as a trusted third party for confidentiality, execution, and authenticity with respect to any entities that is a part of the system. $prog$ is a program that is installed to run in a TEE module; the input and output of $prog$ are denoted by $inp$ and $outp$, respectively.

\begin{figure}
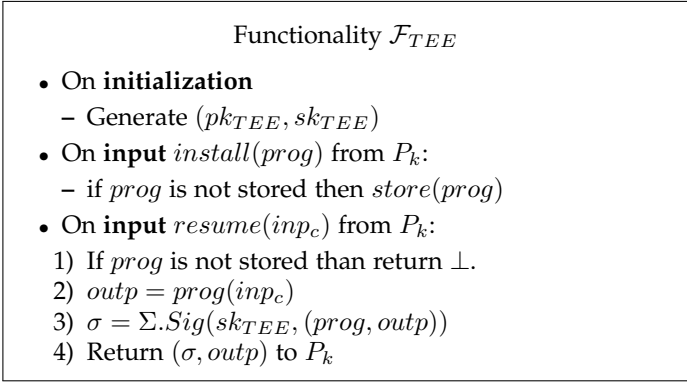

	\noindent\fbox{%
		\parbox{\columnwidth}{%
			\begin{center}
				Functionality $\mathcal{F}_{TEE}$
			\end{center}
			
			\begin{itemize}
			    \item On \textbf{initialization}
			    \begin{itemize}
			        \item Generate $(pk_{TEE}, sk_{TEE})$
			    \end{itemize}
			\end{itemize}
			
			\begin{itemize}
			    \item On \textbf{input} $install(prog)$ from $P_k$:
			    \begin{itemize}
			        \item if $prog$ is not stored then $store(prog)$
			    \end{itemize}
			\end{itemize}
			\begin{itemize}
				\item On \textbf{input} $resume(inp_c)$ from $P_k$:
				\begin{enumerate}
					\item If $prog$ is not stored than return $\bot$.
				    \item $outp = prog(inp_c)$
				    \item $\sigma = \Sigma.Sig(sk_{TEE}, (prog, outp))$
				    \item Return $(\sigma, outp)$ to $P_k$
				\end{enumerate}
			\end{itemize}
		}
	}
	\caption{Ideal TEE $\mathcal{F}_{TEE}$}
	\label{fig:ftee}
\end{figure}

\begin{figure}
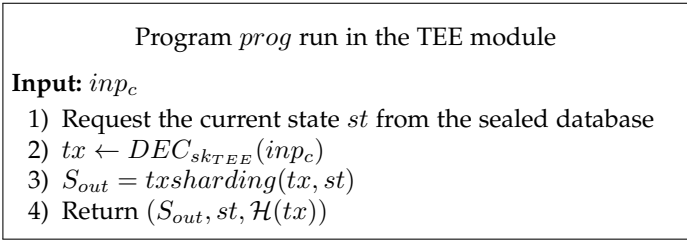

	\noindent\fbox{%
		\parbox{\columnwidth}{%
			\begin{center}
				Program $prog$ run in the TEE module
			\end{center}
			
			\textbf{Input:} $inp_c$
			
			\begin{enumerate}
			    \item Request the current state $st$ from the sealed database
			    \item $tx \gets DEC_{sk_{TEE}}(inp_c)$
			    \item $S_{out} = txsharding(tx, st)$
			    \item Return $(S_{out}, st, \mathcal{H}(tx))$
			\end{enumerate}
		}
	}
	\caption{Program $prog$ run in the TEE module}
	\label{fig:prog}
\end{figure}

On initialization, the TEE module needs to download and monitor $U_i$ for $i \in {1,2,..,n}$. These data are encrypted using the TEE's secret key and then stored in the host storage, which is also referred to as sealing. In this way, the TEE will make sure that its data on the secondary storage cannot be tampered with by a malicious host. To ensure that the TEE always uses the latest version of the sealed UTXO database, rollback-protection systems such as ROTE \cite{matetic2017rote} can be used.

\Cref{fig:prog} defines the program $prog$ that is installed in the TEE module to be used in this work. As can be seen, the program decrypts the encrypted input $inp_c$ using the secret key $sk_{TEE}$. $txsharding$ is implemented inside $prog$ to securely execute the transaction sharding algorithm. The program returns the output shard $S_{out}$, the state $st$ upon which $S_{out}$ was computed, and the hash of the transaction $h_{tx}$. This hash value is used to prevent malicious hosts from feeding the TEE module with fake transactions, which will be discussed in more detail in the next subsection.

Upon running $prog$ with the input $inp_c$, the TEE module obtains the signature $\sigma_{TEE}$ over the output of $prog$ and the code of $prog$ using its private key. Finally, the TEE module returns to the host validator $\sigma_{TEE}$, and the output $outp$ from $prog$.

\subsection{Formal Specification of the Protocol}
Our proposed system supports two main APIs for the end-users: (1) $newtx(tx)$ handles the secure computation of a transaction $tx$, and (2) $read(id, h_{tx})$ returns the transaction that has the hash value $h_{tx}$ from shard $id$.

The protocol for validators is formally defined in \cref{fig:prot}, which relies on $\mathcal{F}_{TEE}$ and $\mathcal{F}_{blockchain}$. The validator accepts two function calls from the clients: $request(tx_c)$ and $process(S_{out}, \sigma_{TEE}, tx)$. $request(tx_c)$ takes as input a transaction that is encrypted by the public key of TEE and sends $tx_c$ to the TEE module. For simplicity, we assume that the validator is TEE-enabled, if not, the validator simply discards the $request(tx_c)$ function call. Since $tx_c$ is encrypted by $pk_{TEE}$, a malicious host cannot tamper with the transaction. The validator waits until the TEE returns an output and relays that output to the function's caller. 

Note that as the output of the TEE includes the transaction's hash $h_{tx}$, the client can check that the TEE indeed processed the correct transaction $tx$ originated from the client. This is possible because of the end-to-end encryption of $tx$ between the client and the TEE. Furthermore, since $\sigma_{TEE}$ protects the integrity of $S_{out}$, the client can verify that $S_{out}$ was not modified by a malicious validator.

The function $process(S_{out}, \sigma_{TEE}, tx)$ receives as input the transaction $tx$, $\sigma_{TEE}$, and output shard $S_{out}$ of $tx$. The validator also verifies $\sigma_{TEE}$ before making a call to $\mathcal{F}_{blockchain}$ to start the transaction validation for $tx$.


\begin{figure}
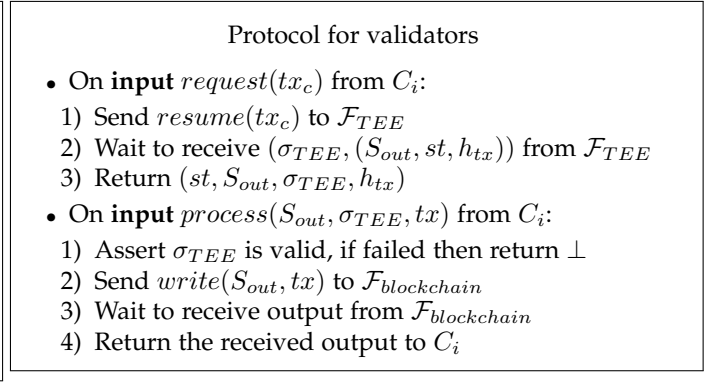

	\noindent\fbox{%
		\parbox{\columnwidth}{%
			\begin{center}
				Protocol for validators
			\end{center}
			
			\begin{itemize}
				\item On \textbf{input} $request(tx_c)$ from $C_i$:
				\begin{enumerate}
					\item Send $resume(tx_c)$ to $\mathcal{F}_{TEE}$
					\item Wait to receive $(\sigma_{TEE}, (S_{out}, st, h_{tx}))$ from $\mathcal{F}_{TEE}$
					\item Return $(st, S_{out},  \sigma_{TEE}, h_{tx})$
				\end{enumerate}
			\end{itemize}
			
			\begin{itemize}
				\item On \textbf{input} $process(S_{out}, \sigma_{TEE}, tx)$ from $C_i$:
				\begin{enumerate}
				    \item Assert $\sigma_{TEE}$ is valid, if failed then return $\bot$
					\item Send $write(S_{out}, tx)$ to $\mathcal{F}_{blockchain}$
					\item Wait to receive output from $\mathcal{F}_{blockchain}$
					\item Return the received output to $C_i$
				\end{enumerate}
			\end{itemize}
		}
	}
	\caption{Protocol for validators}
	\label{fig:prot}
\end{figure}

\Cref{fig:client} illustrates the protocol for the clients. To determine the output shard of a transaction $tx$, a client invokes the API $newtx(tx)$. First, to ensure the integrity of $tx$, the client encrypts $tx$ using the $pk_{TEE}$ and sends $tx_c$ to a TEE-enabled validator $P_k$. Upon receiving $(st, S_{out}, \sigma_{TEE}, h_{tx})$ from $P_k$, the client checks if the hash of $tx$ is equal to $h_{tx}$. This prevents a malicious validator from feeding a fake transaction to the TEE module to manipulate $S_{out}$. The client also verifies if the attestation $\sigma_{TEE}$ is correct. Afterward, the client sends the transaction together with $S_{out}$ and the attestation to the validators of the transaction's input and output shards for final validation. $newtx(tx)$ finally outputs any data received from the validators. The API $read(id, h_{tx})$ can be called when the client wants to obtain the transaction information from the blockchain. The function also returns any data received from $\mathcal{F}_{blockchain}$.

\begin{figure}
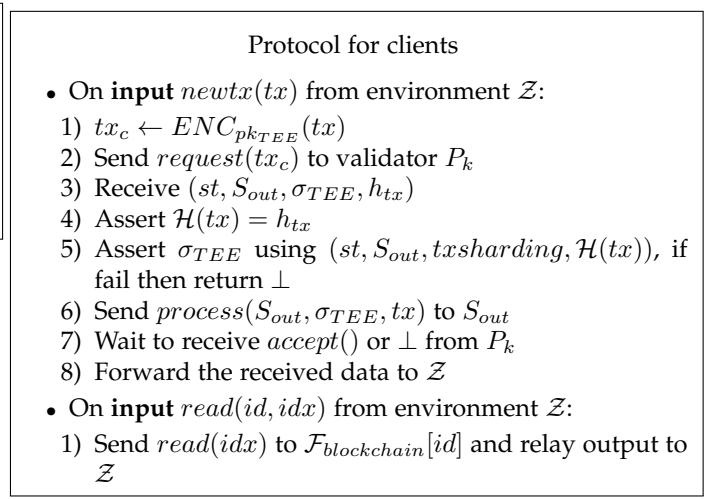

	\noindent\fbox{%
		\parbox{\columnwidth}{%
			\begin{center}
				Protocol for clients
			\end{center}
			
			\begin{itemize}
				\item On \textbf{input} $newtx(tx)$ from environment $\mathcal{Z}$:
				\begin{enumerate}
					\item $tx_c \gets ENC_{pk_{TEE}}(tx)$
					\item Send $request(tx_c)$ to validator $P_k$
					\item Receive $(st, S_{out}, \sigma_{TEE}, h_{tx})$
					\item Assert $\mathcal{H}(tx) = h_{tx}$
					\item Assert $\sigma_{TEE}$ using $(st, S_{out}, txsharding, \mathcal{H}(tx))$, if fail then return $\bot$
					\item Send $process(S_{out}, \sigma_{TEE}, tx)$ to $S_{out}$
					\item Wait to receive $accept()$ or $\bot$ from $P_k$
				    \item Forward the received data to $\mathcal{Z}$
				\end{enumerate}
			\end{itemize}
			
			\begin{itemize}
			    \item On \textbf{input} $read(id, idx)$ from environment $\mathcal{Z}$:
			    \begin{enumerate}
			        \item Send $read(idx)$ to $\mathcal{F}_{blockchain}[id]$ and relay output to $\mathcal{Z}$
			    \end{enumerate}
			\end{itemize}
		}
	}
	\caption{Protocol for clients}
	\label{fig:client}
\end{figure}

\section{Security Analysis and Performance Evaluation} \label{sec:sec-eval}
This section presents a detailed security analysis of the proposed countermeasure under the UC-model and evaluates the performance of the proof-of-concept implementation.

\vspace{-2pt}

\subsection{Security Analysis}
We first prove that the proposed protocol (1) only requires a stateless TEE and (2) is secure against the single-shard flooding attack, and then prove the correct execution security. By design, the $txsharding$ installed in the TEE depends solely on the transaction $tx$, and the state $st$ of the blockchain for computation. As $tx$ is the input and $st$ can be obtained by querying from the UTXO database, the TEE does not need to keep any previous states and computation, thus, it is stateless.

When the system makes decision on the output shard of a transaction, it relies on the $txsharding$ program which is assumed to not base its calculation on transactions' hash value. Therefore, as $txsharding$ also balances the load among the shards, no attackers can manipulate transactions to overwhelm a single shard, hence, the countermeasure is secure against the single-shard flooding attack.

The correct execution security of our system is proven in the Universal Composability (UC) framework \cite{canetti2001universally}. We refer the readers to Appendix A for our proof.

\vspace{-2pt}

\subsection{Performance Evaluation}
This section presents our proof-of-concept implementation as well as some experiments to evaluate its performance. As our countermeasure is immune to the single-shard flooding attack, our goal is to evaluate the overhead of integrating this solution into sharding. We implement the proof-of-concept using Intel SGX which is available on most modern Intel CPUs. With SGX, each implementation of the TEE module is referred to as an \textit{enclave}. The proof-of-concept was developed on Linux machines in which we use the Linux Intel SGX SDK 2.1 for development. We implement and test the protocol for validators using a machine equipped with an Intel Core i7-6700, 16GB RAM, and an SSD drive.

As we want to demonstrate the practicality of the countermeasure, the focus of this evaluation is three-fold: processing time, communication cost, and storage. The processing time includes the time needed for the enclave to monitor the block headers as well as to determine the output shard of a transaction requested by a client. The communication cost represents the network overhead incurred by the interaction between clients and validators to determine the output shards. Additionally, we measure the amount of storage needed when running the enclave.

In our proof-of-concept implementation, we use OptChain \cite{nguyen2019optchain} as $txsharding$. As OptChain determines the output shard based on the transaction's inputs, when obtaining the state from the UTXO database, we only need to load those transaction's inputs from the database. Our proof-of-concept uses Bitcoin as the blockchain platform.

\textbf{Processing time.}
We calculate the processing time for determining the output shard by invoking the enclave with 10 million Bitcoin transactions (encrypted with the TEE public key) and measure the time needed to receive output from the enclave. This latency includes  (1) decrypting the transaction, (2) obtaining the latest state from the UTXO database in the host storage and (3) running $txsharding$. We observe that the highest latency recorded is only about 214 ms and it also does not vary much when running with different transactions. Considering that the average latency of processing a transaction in sharding is about 10 seconds \cite{nguyen2019optchain}, our countermeasure only imposes an additional 0.2 seconds for determining the output shard.

For a detailed observation, we measure the latency separately for each stage. The running time of $txsharding$ is negligible as the highest running time recorded is about 0.13 ms when processing a 10-input transaction at 16 shards. Decrypting the transaction is about 0.579 ms when using 2048-bit RSA, and a query to the UTXO database to fetch the current state takes about 213.2 ms. Hence, fetching the state from the UTXO database dominates the running time due to the fact that the database is stored in the host storage.

Another processing time that we consider is the time needed for updating the UTXO database when a new block is added to the blockchain. With an average number of 2000 transactions per block, each update takes about 65.7 seconds. But this latency does not affect the performance of the system since we can set up two different enclaves running in parallel, one for running $txsharding$, and one for monitoring the UTXO database. Therefore, the time needed to run $txsharding$ is independent of updating the database.

\textbf{Communication cost.}
Our countermeasure imposes some communication overhead over hash-based transaction sharding since the client has to communicate with the validator to determine the output shard. Specifically, the overhead includes sending the encrypted transaction to the validator and receiving a response. The response comprises the state (represented as the block number), output shard ID, attestation, and a hash value of the transaction. The communication overhead sums up to about 601 bytes needed for the client to get the output shard of a transaction. If we consider a communication bandwidth of 10 Mbps, the transmission time would take less than half a millisecond.

\textbf{Storage.}
According to the proposed system for the countermeasure, the enclave needs to store the UTXO database in the host storage, which is essentially the storage of the validator. As of Feb 2020, the size of Bitcoin's UTXO set is about 3.67 GB, which means that an additional 3.67 GB is needed in the validator's storage. However, considering that the validator's storage is large enough to store the whole ledger, which is about 263 GB as of Feb 2020, the extra data trivially accounts for 1.4\%.

\textbf{Summary.}
By evaluating a proof-of-concept of our countermeasure, the result in this section shows that our system imposes negligible overhead compared to the hash-based transaction sharding, which is susceptible to the single-shard flooding attack. Particularly, by incurring insignificant processing time, communication cost, and storage, our proposed countermeasure demonstrates the practicality as it can be integrated into existing sharding solutions without affecting the system performance.

\section{Conclusion}\label{sec:con}
In this paper, we have identified a new attack in existing sharding solutions. Due to the use of hash-based transaction sharding, an attacker can manipulate the hash value of a transaction to conduct the single-shard flooding attack, which is essentially a DoS attack that can overwhelm one shard with an excessive amount of transactions. We have thoroughly investigated the attack with multiple analyses and experiments to illustrate its damage and practicality. Most importantly, our work has shown that by overwhelming a single shard, the attack creates a cascading effect that reduces the performance of the whole system.

We have also proposed a countermeasure based on TEE that efficiently eliminates the single-shard flooding attack. The security properties of the countermeasure have been proven in the UC framework. Finally, with a proof-of-concept implementation, we have demonstrated that our countermeasure imposes negligible overhead and can be integrated into existing sharding solutions.

\bibliographystyle{IEEEtran}
\bibliography{bibliography}

\begin{thebibliography}{10}
\providecommand{\url}[1]{#1}
\csname url@samestyle\endcsname
\providecommand{\newblock}{\relax}
\providecommand{\bibinfo}[2]{#2}
\providecommand{\BIBentrySTDinterwordspacing}{\spaceskip=0pt\relax}
\providecommand{\BIBentryALTinterwordstretchfactor}{4}
\providecommand{\BIBentryALTinterwordspacing}{\spaceskip=\fontdimen2\font plus
\BIBentryALTinterwordstretchfactor\fontdimen3\font minus
  \fontdimen4\font\relax}
\providecommand{\BIBforeignlanguage}[2]{{%
\expandafter\ifx\csname l@#1\endcsname\relax
\typeout{** WARNING: IEEEtran.bst: No hyphenation pattern has been}%
\typeout{** loaded for the language `#1'. Using the pattern for}%
\typeout{** the default language instead.}%
\else
\language=\csname l@#1\endcsname
\fi
#2}}
\providecommand{\BIBdecl}{\relax}
\BIBdecl

\bibitem{luu2016secure}
L.~Luu, V.~Narayanan, C.~Zheng, K.~Baweja, S.~Gilbert, and P.~Saxena, ``A
  secure sharding protocol for open blockchains,'' in \emph{Proceedings of the
  2016 ACM SIGSAC Conference on Computer and Communications Security}.\hskip
  1em plus 0.5em minus 0.4em\relax ACM, 2016, pp. 17--30.

\bibitem{zamani2018rapidchain}
M.~Zamani, M.~Movahedi, and M.~Raykova, ``Rapidchain: Scaling blockchain via
  full sharding,'' in \emph{Proceedings of the 2018 ACM SIGSAC Conference on
  Computer and Communications Security}.\hskip 1em plus 0.5em minus 0.4em\relax
  ACM, 2018, pp. 931--948.

\bibitem{kokoris2018omniledger}
E.~Kokoris-Kogias, P.~Jovanovic, L.~Gasser, N.~Gailly, E.~Syta, and B.~Ford,
  ``Omniledger: A secure, scale-out, decentralized ledger via sharding,'' in
  \emph{2018 IEEE Symposium on Security and Privacy (SP)}.\hskip 1em plus 0.5em
  minus 0.4em\relax IEEE, 2018, pp. 583--598.

\bibitem{wang2019monoxide}
J.~Wang and H.~Wang, ``Monoxide: Scale out blockchains with asynchronous
  consensus zones,'' in \emph{16th USENIX Symposium on Networked Systems Design
  and Implementation (NSDI 19)}, 2019, pp. 95--112.

\bibitem{nguyen2019optchain}
L.~N. Nguyen, T.~D. Nguyen, T.~N. Dinh, and M.~T. Thai, ``Optchain: optimal
  transactions placement for scalable blockchain sharding,'' in \emph{2019 IEEE
  39th International Conference on Distributed Computing Systems
  (ICDCS)}.\hskip 1em plus 0.5em minus 0.4em\relax IEEE, 2019, pp. 525--535.

\bibitem{huang2020repchain}
C.~Huang, Z.~Wang, H.~Chen, Q.~Hu, Q.~Zhang, W.~Wang, and X.~Guan, ``Repchain:
  A reputation-based secure, fast, and high incentive blockchain system via
  sharding,'' \emph{IEEE Internet of Things Journal}, vol.~8, no.~6, pp.
  4291--4304, 2020.

\bibitem{yu2020ohie}
H.~Yu, I.~Nikoli{\'c}, R.~Hou, and P.~Saxena, ``Ohie: Blockchain scaling made
  simple,'' in \emph{2020 IEEE Symposium on Security and Privacy (SP)}.\hskip
  1em plus 0.5em minus 0.4em\relax IEEE, 2020, pp. 90--105.

\bibitem{hong2021pyramid}
Z.~Hong, S.~Guo, P.~Li, and W.~Chen, ``Pyramid: A layered sharding blockchain
  system,'' in \emph{IEEE INFOCOM 2021-IEEE Conference on Computer
  Communications}.\hskip 1em plus 0.5em minus 0.4em\relax IEEE, 2021, pp.
  1--10.

\bibitem{manuskin2020ostraka}
A.~Manuskin, M.~Mirkin, and I.~Eyal, ``Ostraka: Secure blockchain scaling by
  node sharding,'' in \emph{2020 IEEE European Symposium on Security and
  Privacy Workshops (EuroS\&PW)}.\hskip 1em plus 0.5em minus 0.4em\relax IEEE,
  2020, pp. 397--406.

\bibitem{saad2019mempool}
M.~Saad, L.~Njilla, C.~Kamhoua, J.~Kim, D.~Nyang, and A.~Mohaisen, ``Mempool
  optimization for defending against ddos attacks in pow-based blockchain
  systems,'' in \emph{2019 IEEE International Conference on Blockchain and
  Cryptocurrency (ICBC)}.\hskip 1em plus 0.5em minus 0.4em\relax IEEE, 2019,
  pp. 285--292.

\bibitem{baqer2016stressing}
K.~Baqer, D.~Y. Huang, D.~McCoy, and N.~Weaver, ``Stressing out: Bitcoin
  “stress testing”,'' in \emph{International Conference on Financial
  Cryptography and Data Security}.\hskip 1em plus 0.5em minus 0.4em\relax
  Springer, 2016, pp. 3--18.

\bibitem{mirkovic2004taxonomy}
J.~Mirkovic and P.~Reiher, ``A taxonomy of ddos attack and ddos defense
  mechanisms,'' \emph{ACM SIGCOMM Computer Communication Review}, vol.~34,
  no.~2, pp. 39--53, 2004.

\bibitem{tan2013system}
Z.~Tan, A.~Jamdagni, X.~He, P.~Nanda, and R.~P. Liu, ``A system for
  denial-of-service attack detection based on multivariate correlation
  analysis,'' \emph{IEEE transactions on parallel and distributed systems},
  vol.~25, no.~2, pp. 447--456, 2013.

\bibitem{YU20084212}
\BIBentryALTinterwordspacing
J.~Yu, H.~Lee, M.-S. Kim, and D.~Park, ``Traffic flooding attack detection with
  snmp mib using svm,'' \emph{Computer Communications}, vol.~31, no.~17, pp.
  4212--4219, 2008. [Online]. Available:
  \url{https://www.sciencedirect.com/science/article/pii/S0140366408005094}
\BIBentrySTDinterwordspacing

\bibitem{biron2018real}
Z.~A. Biron, S.~Dey, and P.~Pisu, ``Real-time detection and estimation of
  denial of service attack in connected vehicle systems,'' \emph{IEEE
  Transactions on Intelligent Transportation Systems}, vol.~19, no.~12, pp.
  3893--3902, 2018.

\bibitem{btc_2018}
\BIBentryALTinterwordspacing
``Report: Bitcoin (btc) mempool shows backlogged transactions, increased fees
  if so?'' June 2018. [Online]. Available:
  \url{https://bitcoinexchangeguide.com/report-bitcoin-btc-mempool-shows-backlogged-transactions-increased-fees-if-so/}
\BIBentrySTDinterwordspacing

\bibitem{intel2014software}
I.~Anati, S.~Gueron, S.~Johnson, and V.~Scarlata, ``Innovative technology for
  cpu based attestation and sealing,'' vol.~13, p.~7, 2013.

\bibitem{matetic2019bite}
S.~Matetic, K.~W{\"u}st, M.~Schneider, K.~Kostiainen, G.~Karame, and S.~Capkun,
  ``Bite: Bitcoin lightweight client privacy using trusted execution,'' in
  \emph{28th USENIX Security Symposium (USENIX Security 19)}, 2019, pp.
  783--800.

\bibitem{lind2019teechain}
J.~Lind, O.~Naor, I.~Eyal, F.~Kelbert, E.~G. Sirer, and P.~Pietzuch,
  ``Teechain: a secure payment network with asynchronous blockchain access,''
  in \emph{Proceedings of the 27th ACM Symposium on Operating Systems
  Principles}, 2019, pp. 63--79.

\bibitem{cheng2019ekiden}
R.~Cheng, F.~Zhang, J.~Kos, W.~He, N.~Hynes, N.~Johnson, A.~Juels, A.~Miller,
  and D.~Song, ``Ekiden: A platform for confidentiality-preserving,
  trustworthy, and performant smart contracts,'' in \emph{2019 IEEE European
  Symposium on Security and Privacy (EuroS\&P)}.\hskip 1em plus 0.5em minus
  0.4em\relax IEEE, 2019, pp. 185--200.

\bibitem{das2019fastkitten}
P.~Das, L.~Eckey, T.~Frassetto, D.~Gens, K.~Host{\'a}kov{\'a}, P.~Jauernig,
  S.~Faust, and A.-R. Sadeghi, ``Fastkitten: practical smart contracts on
  bitcoin,'' in \emph{28th USENIX Security Symposium (USENIX Security 19)},
  2019, pp. 801--818.

\bibitem{aoki2019simblock}
Y.~{Aoki}, K.~{Otsuki}, T.~{Kaneko}, R.~{Banno}, and K.~{Shudo}, ``Simblock: A
  blockchain network simulator,'' in \emph{IEEE INFOCOM 2019 - IEEE Conference
  on Computer Communications Workshops (INFOCOM WKSHPS)}, April 2019, pp.
  325--329.

\bibitem{johnson2001elliptic}
D.~Johnson, A.~Menezes, and S.~Vanstone, ``The elliptic curve digital signature
  algorithm (ecdsa),'' \emph{International journal of information security},
  vol.~1, no.~1, pp. 36--63, 2001.

\bibitem{pass2017formal}
R.~Pass, E.~Shi, and F.~Tramer, ``Formal abstractions for attested execution
  secure processors,'' in \emph{Annual International Conference on the Theory
  and Applications of Cryptographic Techniques}.\hskip 1em plus 0.5em minus
  0.4em\relax Springer, 2017, pp. 260--289.

\bibitem{matetic2017rote}
S.~Matetic, M.~Ahmed, K.~Kostiainen, A.~Dhar, D.~Sommer, A.~Gervais, A.~Juels,
  and S.~Capkun, ``Rote: Rollback protection for trusted execution,'' in
  \emph{26th USENIX Security Symposium (USENIX Security 17)}, 2017, pp.
  1289--1306.

\bibitem{canetti2001universally}
R.~Canetti, ``Universally composable security: A new paradigm for cryptographic
  protocols,'' in \emph{Proceedings 2001 IEEE International Conference on
  Cluster Computing}.\hskip 1em plus 0.5em minus 0.4em\relax IEEE, 2001, pp.
  136--145.

\end{thebibliography}

\clearpage

\appendices

\crefalias{section}{appendix}
\section{Proof of correct execution in the UC framework}\label{app:proof}
\begin{figure}
	\noindent\fbox{%
		\parbox{\columnwidth}{%
			\begin{center}
				$\mathcal{F}_{cm}$: Ideal functionality of the countermeasure
			\end{center}
			
			\begin{itemize}
			    \item On \textbf{Initialization}:
			    \begin{enumerate}
			        \item $DB_i = \emptyset, \forall i \in [N]$
			    \end{enumerate}
				\item On \textbf{input} $newtx(tx)$ from any party $P_i$:
				\begin{enumerate}
				    \item relay input to $\mathcal{A}$
				    \item Request current state $st$ from the UTXO database
				    \item $S_{out} = txsharding(tx, st)$
				    \item $DB_{S_{out}} = DB_{S_{out}} || tx$
				    \item Send delayed $accept()$ to $P_i$
				\end{enumerate}
				\item On \textbf{input} $read(id, h_{tx})$ from any party $P_i$:
				\begin{enumerate}
				    \item If $DB_{id}[h_{tx}]$ is not available then return $\bot$
				    \item Otherwise, return $DB_{id}[h_{tx}]$
				\end{enumerate}
			\end{itemize}
		}
	}
	\caption{Ideal functionality of the countermeasure}
	\label{fig:fcm}
\end{figure}

In the UC framework, a real world involves parties running the proposed protocol, namely $\Pi_{cm}$. On the other hand, an ideal world consists of parties that interact with an ideal functionality $\mathcal{F}_{cm}$, a trusted third party that implements the APIs of the proposed protocol, i.e., $newtx(\cdot)$ and $read(\cdot)$. \Cref{fig:fcm} shows the definition of $\mathcal{F}_{cm}$. Any adversary $\mathcal{A}$ in the real world is introduced in the ideal world by a simulator $\mathcal{S}$ with an adversary model defined in \Cref{sec:countermeasurethreat}.

To prove that the proposed protocol $\Pi_{cm}$ achieves correct execution security, we show that: (1) $\mathcal{F}_{cm}$ achieves the correct execution security in the ideal world; and (2) the real and ideal worlds are indistinguishable to an external environment $\mathcal{Z}$. This implies that any attack violating security goals in the real world is translatable to a corresponding attack in the ideal one. This proves that the real world protocol also achieves the correct execution security. 

In $\mathcal{F}_{cm}$, whenever a party triggers $newtx(tx)$ with a transaction $tx$, the execution of $txsharding$ is performed internally by the ideal functionality $\mathcal{F}_{cm}$ based on $tx$ and the current state $st$ to determine $S_{out}$. Since $\mathcal{F}_{cm}$ is trusted under UC, $txsharding$ is guaranteed to be correctly executed. Furthermore, $\mathcal{F}_{cm}$ also validates the transaction in the blockchain and returns only $accept()$ or $\bot$ to the party, an adversary does not have control over the output shard $S_{out}$, hence, the adversary cannot tamper with $S_{out}$. Therefore, $\mathcal{F}_{cm}$ achieves correct execution security.

Let $\mathcal{A}$ be an adversary against the proposed protocol. Per Canetti \cite{canetti2001universally}, we say that $\Pi_{cm}$ UC-realizes $\mathcal{F}_{cm}$ if there exists a simulator $\mathcal{S}$, such that any environment $\mathcal{Z}$ cannot distinguish between interacting with the adversary $\mathcal{A}$ and $\Pi_{cm}$ or with the simulator $\mathcal{S}$ and the ideal functionality $\mathcal{F}_{cm}$. By that definition, we prove the following theorem:

\begin{thm}
The protocol $\Pi_{cm}$ in the $(\mathcal{F}_{TEE}, \mathcal{F}_{blockchain})$ hybrid model UC-realizes the ideal functionality $\mathcal{F}_{cm}$.
\end{thm}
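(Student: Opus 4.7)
The plan is to follow the standard UC simulation paradigm: construct a PPT simulator $\mathcal{S}$ that, while interacting only with the ideal functionality $\mathcal{F}_{cm}$, internally emulates copies of $\mathcal{F}_{TEE}$ and $\mathcal{F}_{blockchain}$ together with honest parties executing $\Pi_{cm}$, so that the view of any environment $\mathcal{Z}$ in the ideal world is computationally indistinguishable from its view in the real world. I would organize the simulator around the two entry points $newtx(tx)$ and $read(id, h_{tx})$ that $\mathcal{Z}$ exercises, and align the internal bookkeeping of $\mathcal{S}$ with the state kept by $\mathcal{F}_{cm}$.

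For the behavior on $newtx$: when $\mathcal{F}_{cm}$ relays the call, $\mathcal{S}$ knows only that some honest party has a transaction. It uses its simulated $pk_{TEE}$ to produce a ciphertext $tx_c$ (for honest senders it encrypts a dummy message of the right length; for corrupt senders it uses the adversarially supplied plaintext), then drives the emulated validator through the $request$/$resume$ path of $\mathcal{F}_{TEE}$, executes $prog$ against the mirror of the UTXO database it maintains from $\mathcal{F}_{blockchain}$, and produces $\sigma_{TEE}$ with its simulated signing key. It then hands $(st, S_{out}, \sigma_{TEE}, h_{tx})$ back to the emulated client and forwards $process(S_{out}, \sigma_{TEE}, tx)$ to the simulated $\mathcal{F}_{blockchain}$. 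For corrupted validators, $\mathcal{S}$ lets $\mathcal{A}$ observe, reorder, or drop messages at will; by the adversarial model of \Cref{sec:countermeasurethreat}, $\mathcal{A}$ cannot peek inside the TEE and therefore sees exactly the same distribution of outputs as in the real world. On $read$, $\mathcal{S}$ simply consults its mirror of $\mathcal{F}_{blockchain}$.

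Indistinguishability then follows by a short hybrid argument: $H_0$ is the real execution; $H_1$ replaces each honestly generated $ENC_{pk_{TEE}}(tx)$ with an encryption of $0^{|tx|}$, justified by IND-CPA of the encryption scheme; $H_2$ aborts whenever $\mathcal{A}$ ever produces a valid $\sigma_{TEE}$ on a $(prog, outp)$ pair that was not in fact emitted by the simulated $\mathcal{F}_{TEE}$, reducing to EU-CMA of $\Sigma$; $H_3$ aborts whenever $\mathcal{A}$ submits a $tx' \ne tx$ with $\mathcal{H}(tx') = \mathcal{H}(tx)$, reducing to collision resistance of $\mathcal{H}$; finally $H_4$ re-routes the actual execution of $txsharding$ through $\mathcal{F}_{cm}$, which is syntactically identical to the ideal world. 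Each hop is negligibly close to the previous one.

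The hard part will be arguing the critical invariant across the interleavings that $\mathcal{A}$ can induce when it controls the TEE's host: namely, that every $(\sigma_{TEE}, S_{out}, tx)$ tuple that reaches $\mathcal{F}_{blockchain}$ with a valid signature must correspond to an $S_{out}$ actually produced by $txsharding$ on the honest transaction $tx$ at some state $st$. The simulator must preserve this even though $\mathcal{A}$ can drop, replay, or reorder $request$ and $process$ calls, and can feed fraudulent inputs in an attempt to decouple the attestation from the honest transaction. The binding works because $\sigma_{TEE}$ is, by construction of $prog$, taken over $(prog, (S_{out}, st, \mathcal{H}(tx)))$, and the client's own post-check $\mathcal{H}(tx) = h_{tx}$ rejects any mismatch before a $process$ call is issued; combining EU-CMA with collision resistance shows that any deviating behavior of the real protocol translates into a breach of one of these cryptographic assumptions. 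Articulating this invariant carefully, and leveraging the stateless design to avoid any hidden per-session state that the simulator would otherwise have to reconstruct, is the principal technical content of the proof.
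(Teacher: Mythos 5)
Your proposal follows the same overall strategy as the paper's proof: construct a simulator $\mathcal{S}$ that internally emulates $\mathcal{F}_{TEE}$ and $\mathcal{F}_{blockchain}$ (generating the TEE key pair itself and recording all $(\sigma_{TEE}, outp)$ tuples), then walk a chain of hybrids from the real execution to the ideal world, with the pivotal cryptographic step being an abort whenever $\mathcal{A}$ delivers a valid attestation that the emulated $\mathcal{F}_{TEE}$ never produced, reduced to EU-CMA security of $\Sigma$ --- your $H_2$ is exactly the paper's $H_3$. Where you differ is in the decomposition: the paper spends its first two hybrids on emulating the two functionalities and invokes only the signature forgery reduction, whereas you fold the emulation into the simulator description and instead add two further hops, one replacing honest ciphertexts $ENC_{pk_{TEE}}(tx)$ with encryptions of $0^{|tx|}$ (IND-CPA) and one aborting on a hash collision $\mathcal{H}(tx') = \mathcal{H}(tx)$ with $tx' \neq tx$ (collision resistance). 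These additions are not cosmetic: the paper states the collision-resistance assumption on $\mathcal{H}$ in its setup but never discharges it in the proof, even though the client's check $\mathcal{H}(tx) = h_{tx}$ is precisely what binds the attestation to the honest transaction against a host that feeds the TEE fraudulent inputs --- the invariant you correctly flag as the technical heart of the argument. Your version is therefore a more careful instantiation of the same proof; the paper's is terser but leaves the binding of $\sigma_{TEE}$ to $tx$ (as opposed to merely to $outp$) implicit. One small caution: the IND-CPA hop requires that nothing the environment later sees depends on the plaintext in a way the simulator cannot reproduce from $\mathcal{F}_{cm}$'s outputs; since $\mathcal{F}_{cm}$ only relays the $newtx$ input to $\mathcal{A}$ and returns $accept()$, you should state explicitly what leakage $\mathcal{S}$ receives so that the dummy-encryption step is well defined for honest senders.
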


\begin{proof}
We prove the indistinguishability between the real and ideal worlds through a series of \textit{hybrid steps} as commonly done in previous work \cite{cheng2019ekiden,lind2019teechain}. These hybrid steps start at $H_0$ - the real-world execution of $\Pi_{cm}$ in the $(\mathcal{F}_{TEE}, \mathcal{F}_{blockchain})$ hybrid model, and finally becomes the execution in the ideal world. The indistinguishability is proven in each step.

Hybrid $H_0$ is the real-world execution where parties run $\Pi_{cm}$ in the $(\mathcal{F}_{TEE}, \mathcal{F}_{blockchain})$ hybrid model.

Hybrid $H_1$ behaves in the same manner as $H_0$, except that $\mathcal{S}$ emulates $\mathcal{F}_{TEE}$ and $\mathcal{F}_{blockchain}$. First, $\mathcal{S}$ generates a key pair $(pk_{TEE},sk_{TEE})$ and publishes $pk_{TEE}$. Whenever $\mathcal{A}$ interacts with $\mathcal{F}_{TEE}$, $\mathcal{S}$ records messages sent by $\mathcal{A}$ and emulates $\mathcal{F}_{TEE}$’s behavior. Likewise, $\mathcal{S}$ emulates $\mathcal{F}_{blockchain}$ by storing $\mathbf{DB}$ internally. As the view of $\mathcal{A}$ in $H_1$ is identically simulated in $H_0$, $\mathcal{Z}$ cannot distinguish between $H_1$ and the execution $H_0$.

Hybrid $H_2$ proceeds as $H_1$. However, every time $\mathcal{A}$ communicates with $\mathcal{F}_{blockchain}$, $\mathcal{S}$ identically emulates $\mathcal{F}_{blockchain}$’s behavior for $\mathcal{A}$. As the view of $\mathcal{A}$ in $H_2$ are simulated when interacting with the ledger, then environment $\mathcal{Z}$ cannot distinguish between $H_2$ and $H_1$.

Hybrid $H_3$ modifies $H_2$ as follows. When $\mathcal{A}$ triggers $\mathcal{F}_{TEE}$ with a message $install(prog)$, $\mathcal{S}$ records a tuple $(\sigma_{TEE}, outp)$ for all subsequent $resume(\cdot)$ calls, where $outp$ is the output of $prog$ and $\sigma_{TEE}$ is an attestation under $sk_{TEE}$ over $outp$ and $prog$. $\mathcal{S}$ keeps a set of all such tuples. Whenever $\mathcal{A}$ sends a tuple $(\sigma_{TEE}, outp)$ that has not been recorded by $\mathcal{S}$ to $\mathcal{F}_{blockchain}$ or an honest party, $\mathcal{S}$ simply stops the execution.

We can prove that $\mathcal{Z}$ cannot distinguish between $H_3$ and $H_2$ as follows. In $H_2$, if $\mathcal{A}$ sends forged attestations/signatures to $\mathcal{F}_{blockchain}$ or an honest party, signature verification by $\mathcal{F}_{blockchain}$ or the honest party will fail with negligible probability (as we assume the signature scheme $\Sigma$ is EU-CMA secure). If $\mathcal{Z}$ can distinguish $H_2$ from $H_3$, we can construct an adversary using $\mathcal{Z}$ and $\mathcal{A}$ to win the game of signature forgery.

Hybrid $H_4$ proceeds as $H_3$ with one modification: $\mathcal{S}$ emulates the new transaction processing. Specifically, honest parties send $newtx$ to $\mathcal{F}_{cm}$. $\mathcal{S}$ emulates messages from $\mathcal{F}_{TEE}$ and $\mathcal{F}_{blockchain}$ as in $H_3$, i.e., recording tuples $(\sigma_{TEE}, outp)$. If the party is corrupted, $\mathcal{S}$ sends $newtx(tx)$ to $\mathcal{F}_{cm}$ as $P_i$. It can be seen that the view of $\mathcal{A}$ is the same as in $H_2$, as $\mathcal{S}$ can identically emulate $\mathcal{F}_{TEE}$ and $\mathcal{F}_{blockchain}$. 

It can be seen that $H_4$ is identical to the ideal protocol. In $H_4$, while $\mathcal{S}$ interacts with $\mathcal{F}_{cm}$, it emulates $\mathcal{A}$’s view of the real-world. Now, $\mathcal{S}$ only needs to output to $\mathcal{Z}$ what $\mathcal{A}$ outputs in the real-world. Thus, there exists no environment $\mathcal{Z}$ that can distinguish between interaction between $\mathcal{A}$ and $\Pi_{cm}$, from interaction between $\mathcal{S}$ and $\mathcal{F}_{cm}$.
\end{proof}

\section{Technical overview of OptChain}\label{app:optchain}
Nguyen et al. \cite{nguyen2019optchain} proposes a transaction placement algorithm to improve the throughput of sharding blockchain, called OptChain. OptChain introduces a new sharding paradigm, in which cross-shard transaction are minimized while maintaining the load balancing among shards, resulting in almost twice faster confirmation time and throughput. Specifically, they form an optimization problem in which the algorithm determines the best shard to submit a transaction in order to minimize cross-shard transactions while guaranteeing the temporal balance, thereby shortening the confirmation time and boosting the overall system throughput. 
To achieve that, the algorithm optimizes the following goals:
\begin{enumerate}
    \item Minimizing cross-shard transactions: Reduce the number of cross-shard transactions by grouping related transactions into a same shard.
    \item Maintaining temporal balancing: To distribute load evenly among shards to increase parallelism and reduce queuing time. 
\end{enumerate}

On the one hand, by treating transactions as a stream of nodes in online directed acyclic graph (DAG), they propose a lightweight and on-the-fly PageRank score calculation to quantitatively measure transactions' relationship to each shard. This is referred to as the \emph{Transaction-to-Shard} (T2S) score. The T2S score between a transaction $u$ and a shard $S_i$ measures the probability that a random walk from the node $u$ ends up at some node in $S_i$, i.e., how likely the transaction should be placed into the shard without causing further cross-shard transactions.

On the other hand, they provide an estimation on the expected latency if putting a transaction into a given shard, which is called the \emph{Latency-to-Shard} (L2S) score. The L2S score estimates the processing delay when placing the transaction into a shard.

The decision making of which shard to place a transaction into is guaranteed to balance both the T2S and L2S scores. With this mechanism, OptChain has been shown to outperform Omniledger, which uses transactions' hash values to determine their output shard. Furthermore, OptChain's throughput can be scaled up linearly with the number of shards while still keeping almost the same average transaction confirmation delay.

\end{document}